\newcolumntype{Y}{>{\centering\arraybackslash}X}
\lstdefinelanguage{PDDL}{
  morekeywords={define,domain,problem,requirements,types,objects,predicates,action,parameters,precondition,effect,init,goal,Changes},
  morekeywords=[2]{perform_ipsweep_112_10, perform_probe_112_10, perform_breakin_passwd, perform_breakin_shadow, login_attempt_hacker2, install_ddos_tool, launch_ddos_attack, perform_ipsweep_115_20, perform_probe_115_20, perform_ipsweep_112_50, perform_probe_115_50, perform_probe_112_50, attacker_has_access_to_dns_server, attacker_exploit_cve_2020_1350_to_compromise_dns_server, attacker_has_access_to_admin_server_via_dns_server, attacker_exploit_cve_2019_0211_to_compromise_admin_server, attacker_has_access_to_webserver, attacker_exploit_cve_2015_1635_to_compromise_web_server, attacker_has_access_to_mail_server, attacker_exploit_cve_2004_0840_to_compromise_mail_server, attacker_has_access_to_ftp_server_via_mail_server, attacker_exploit_cve_2013_4465_to_compromise_ftp_server, attacker_has_access_to_admin_server_via_ftp_server, attacker_exploit_cve_2009_0241_to_compromise_admin_server, attacker_has_access_to_sql_server_via_mail_server, attacker_exploit_cve_2018_1058_to_compromise_sql_server, attacker_has_access_to_ftp_server_via_sql_server, attacker_exploit_cve_2020_9365_to_compromise_ftp_server, attacker_exploit_cve_2013_4465_t_compromise_ftp_server, attacker_exploit_cve_2017_14491_to_compromise_dns_server, has_access_to_mail_server, has_access_to_ftp_server_via_mail_server, has_access_to_admin_server_via_ftp_server},  
  sensitive=true,                
  morecomment=[l]{;},            
  morestring=[b]",               
}
\ttfamily\fontsize{6}{5}\selectfont,  
\newtheorem{defn}{Definition} 
\newtheorem{prop}{Proposition}
\newtheorem{theorem}{Theorem}
\newcommand{\frmtitle}{Attack-Connectivity Graph}
\newcommand{\frm}{{\em ACG}}
\newcommand{\asg}{{\em ACG}}
\newcommand{\rmetric}{robustness metric}
\newcommand{\pratt}[2]    {\ensuremath{\mathit{pre}^{#2}({#1})}}
\newcommand{\postp}[2]    {\ensuremath{\mathit{post}^{+#2}({#1})}}
\newcommand{\postn}[2]    {\ensuremath{\mathit{post}^{-#2}({#1})}}
\newcommand{\npratt}    {\ensuremath{\mathit{pre}}}
\newcommand{\npostp}    {\ensuremath{\mathit{post}^{+}}}
\newcommand{\npostn}    {\ensuremath{\mathit{post}^{-}}}
\newcommand{\src}    {\ensuremath{\mathit{src}}}
\newcommand{\dst}    {\ensuremath{\mathit{dest}}}
\newcommand{\tfunc}    {\ensuremath{\lambda}}
\newcommand{\CompiledSymb}{\ensuremath{\alpha}}
\newcommand{\CompiledModel}{\ensuremath{\mathcal{M}^{\alpha}}}
\newcommand{\toolName}{\textbf{ Security Posture Evaluation using AI Planner-Reasoning }}
\newcommand{\tool}{SPEAR }
\begin{document}

\title{SPEAR: Security Posture Evaluation using AI Planner-Reasoning on Attack-Connectivity Hypergraphs}


\author{Rakesh Podder}
\orcid{0009-0008-7394-1369}
\affiliation{%
  \institution{Colorado State University}
  \city{Fort Collins}
  \state{Colorado}
  \country{USA}
}
\email{rakesh.podder@colostate.edu}

\author{Turgay Caglar}
\orcid{0009-0005-4918-3340}
\affiliation{%
  \institution{Colorado State University}
  \city{Fort Collins}
  \state{Colorado}
  \country{USA}
}
\email{turgay.caglar@colostate.edu}

\author{Shadaab Kawnain Bashir}
\orcid{0009-0008-3090-7187}
\affiliation{%
  \institution{Colorado State University}
  \city{Fort Collins}
  \state{Colorado}
  \country{USA}}
\email{shadaab.bashir@colostate.edu}

\author{Sarath Sreedharan}
\orcid{0000-0002-2299-0178}
\affiliation{%
  \institution{Colorado State University}
  \city{Fort Collins}
  \state{Colorado}
  \country{USA}
}
\email{sarath.sreedharan@colostate.edu}

\author{Indrajit Ray}
\orcid{0000-0002-3612-7738}
\affiliation{%
  \institution{Colorado State University}
  \city{Fort Collins}
  \state{Colorado}
  \country{USA}}
\email{indrajit.ray@colostate.edu}

\author{Indrakshi Ray}
\orcid{0000-0002-0714-7676}
\affiliation{%
  \institution{Colorado State University}
  \city{Fort Collins}
  \state{Colorado}
  \country{USA}}
\email{indrakshi.ray@colostate.edu}

\renewcommand{\shortauthors}{Rakesh Podder et al.}

\begin{abstract}
Graph-based frameworks are often used in network hardening to help a cyber defender understand how a network can be attacked and how the best defenses can be deployed. However, incorporating network connectivity parameters in the attack graph, reasoning about the attack graph when we do not have access to complete information,  providing system administrator suggestions in an understandable format, and allowing them to do what-if analysis on various scenarios and attacker motives is still missing.   
We fill this gap by presenting SPEAR, a formal framework with tool support for security posture evaluation and analysis that keeps human-in-the-loop. SPEAR uses the causal formalism of AI planning
to model vulnerabilities and configurations in a networked
system. It automatically converts network configurations and vulnerability descriptions into planning models expressed in the Planning Domain Definition Language (PDDL).
SPEAR identifies a set of diverse security hardening strategies that can be presented in a manner understandable to the domain expert.
These allow the administrator to explore the network hardening solution space in a systematic fashion and help evaluate the impact and compare the different solutions.
\end{abstract}

\begin{CCSXML}
<ccs2012>
   <concept>
       <concept_id>10002978.10003006.10003013</concept_id>
       <concept_desc>Security and privacy~Distributed systems security</concept_desc>
       <concept_significance>300</concept_significance>
       </concept>
   <concept>
       <concept_id>10003033.10003083.10003014</concept_id>
       <concept_desc>Networks~Network security</concept_desc>
       <concept_significance>500</concept_significance>
       </concept>
   <concept>
       <concept_id>10002978.10003014</concept_id>
       <concept_desc>Security and privacy~Network security</concept_desc>
       <concept_significance>500</concept_significance>
       </concept>
 </ccs2012>
\end{CCSXML}

\ccsdesc[300]{Security and privacy~Distributed systems security}
\ccsdesc[500]{Networks~Network security}
\ccsdesc[500]{Security and privacy~Network security}
\keywords{Attack-Connectivity Graph, AI Planning, Network Hardening, Attack Graph Analysis}


\maketitle
\section{Introduction}
\label{sec:introduction}
Cyber attacks continue to pose a significant threat despite increased efforts focusing on network security. 
One approach for understanding and mitigating network vulnerabilities is attack graph analysis. An attack graph (also known as a cybersecurity graph, threat graph, exploitability graph, vulnerability graph, or risk-assessment graph)
represents the vulnerabilities present in a system and dependencies among those vulnerabilities. The graph helps identify the order and the combination in which vulnerabilities must be exploited to launch the attack. It provides security professionals with valuable insights into potential attack vectors and can help prioritize efforts to secure critical assets effectively.

Schneier \cite{schneier99} was the first to propose the paradigm of attack trees for modeling security threats. Since then attack trees/graphs have been variously discussed for different aspects of cyber preparation including cyber risk management, determining least cost paths for attacks, and cost-benefit analysis of cyber hardening measures \cite{schneier99,phillips-1998,ray-pool-2005,ou-mulval-2005,dewri_optimal_2007,nguyen_2017_multi,dewri_optimal_2012,beckers2014determining,durkota_2019_hardening,zhao_2021_structural}. 
However, there are several gaps related to the use of attack graphs.
Our proposed graph,
known as {\em Attack-Connectivity Graph} (\asg), generalizes the earlier works and alleviates
their shortcomings.
First, we are not aware of any attack graph framework that allows the users to perform {\bf what-if analysis on attack graphs driven by feedback about potential defensive strategies and their efficacy}. Second, almost all previous works assume that the attack graph is monotonic~\cite{hoffmann}.
The monotonicity assumption translates to the condition that the attacker is exploiting an increasing number of vulnerabilities and thus increasing its privileges as it penetrates further into the network. Since there are only a polynomial number of privileges an attacker can gain, {\em this assumption enables analysis algorithms to terminate in polynomial time}. 
However, this may not always be true. For example, the monotonicity assumption precludes an attacker from revisiting nodes previously visited -- a distinct possibility for lateral movements. If the attacker can only launch attacks from specific hosts, this scenario cannot be modeled with the monotonicity assumption. {\bf We relax the monotonicity assumption but still provide a scalable solution}.
Third, most attack graph tools' are focused on identifying attack paths and defensive strategies. However, these do not evaluate the effect of these strategies on network properties such as, but not limited to, connectivity. {\bf By jointly modeling both dependencies among vulnerabilities and other network properties in the form of a hypergraph, we enable such analysis.} 
Last but not least, most existing tools fail to support model reusability. As a system changes, the corresponding attack graph also changes. Incremental updates and reuse of attack graphs are poorly supported
by most existing tools. This is a unique feature of work approach.

We present \tool, a formal framework to evaluate security posture using AI Planning for attack graph modeling. It uses the {\em causal formalism} of AI planning to model vulnerabilities and configurations in a networked system. 
This causal representation is very close to {\em how human users reason about actions} \cite{malle2006mind} and enable the user to guide the \tool tool for performing refined what-if analysis. 
Therein lies the advantage of \tool over logic-based frameworks such as MulVAL~\cite{ou-mulval-2005} which requires a different type of expertise than can be expected from system administrators as well as other AI planning based cybersecurity frameworks \cite{ghosh2012planner, tiwary2017pddlassistant, b37}. \tool uses AI planning techniques to identify potential attack paths (or plans in AI planning parlance) in the network. As AI planning is scalable to very large problems, \tool can be used for real-world networks. Once attack paths are identified, \tool automatically provides defensive strategies for cyber-hardening the network. Cyber-hardening in \tool (or preventing the attacker from reaching the goal node in the attack graph) is reduced to the problem of rendering an attack graph ``unsolvable'' (in AI Planning parlance unsolvable means no attack path is found) or increasing the minimum cost of a plan for potential attackers.
Hardening the network to prevent the attacker from reaching the goal node in the attack graph could involve changes to the network that might be quite expensive. Additionally, in most practical situation, it might be hard to estimate the cost associated with these changes upfront. As such, instead of simply finding a single set of changes, we identify {\em a set of diverse solutions} that achieves the desired level of security readiness but involves updating different parts of the overall network. 

These solutions involve strategically modifying the network configuration, introducing new defense mechanisms, or strengthening existing ones. By considering the model space search, we aim to identify the {\em most effective combination of changes} that disrupt attack paths; the system-admin can choose the solution they believe would be the easiest to implement for their specific use case. This approach is different from attack graph analysis tools that model the network hardening problem as a multi-objective optimization problem \cite{dewri_optimal_2007,poolsappasit2011dynamic,miller_2023_grasp,eisenstadt2016novel} in that this technique allows the defender to perform what-if analysis and select the most desirable solution. 
In this work, we make the following major contributions:\\
\begin{inparaenum}[(i)]
    \item Introduce the notion of \frmtitle\ (\asg) that allows us to jointly model both attack paths and network connectivity. Relaxed the monotonicity assumptions.  
    Showed how \asg\ can be captured using AI planning and provide complexity results. \\
    \item Illustrate how this model can be used to analyze the robustness of the underlying network using various graph-theoretic metrics. We focus on \textit{impenetrability} and \textit{attack difficulty} metrics and show how these metrics can be calculated for a given \asg\ using modified planning representations of the original graphs. \\
    \item Propose a design framework that allows us to perform what-if analyses to identify network updates that could harden the network as defined by the different metrics. \\
    \item Show how we calculate such designs using a heuristic search over the space of model edits, and we also propose a novel admissible heuristic that can still guarantee identifying optimal solutions. \\
    \item Demonstrated how we can use this framework when the design costs are partially unspecified by generating diverse solutions. \\ 
    \item Perform a series of empirical evaluations on networks of different configurations and scales to test the computational characteristics of our proposed algorithm.
\end{inparaenum}

The rest of this paper is organized as follows. Section \ref{sec:background} provides a brief introduction to AI planning. Section \ref{sec:exam-approach} introduces the  \frmtitle\ (\asg) model. Section \ref{sec:\tool} describes how \tool\ tool can leverage \asg\ for analyzing network security. Section \ref{sec:architecture} discusses the architecture of \tool\ and the test network. Section \ref{sec:case-study} presents a running example, focusing on the ability of the \tool\ tool to perform what-if analysis on the test network.  Section \ref{sec:evaluation} presents an empirical evaluation of \tool\ tool's performance. Section \ref{sec:related-work} discusses related work. Section \ref{sec:conclusion-future} concludes the paper. 
\section{Background}
\label{sec:background}
A classical planning problem is defined by a tuple  $\mathcal{M} = \langle F, A, I, G \rangle$, where $F$ corresponds to a set of state variables or fluents, $A$ the set of actions, the initial state $I$, and goal specification $G \subseteq F$. A given planning problem is a compact representation of a directed graph called a transition system. The nodes in this graph correspond to the different states of the system. Each state $s$ is characterized by the values of a set of propositional (boolean) 
variables denoted by $F$. 
The directed edges in the graph correspond to transitions between states as a result of executing an action. Each action, $a \in A$, is further defined by $a = \langle \textit{pre}(a), \textit{add}(a), \textit{del}(a)\rangle$; where $\textit{pre}(a) \subseteq F$ is the precondition that determines the states where an action can be executed, $\textit{add}(a) \subseteq F$ are the add effects that determine the fluents that will be set true by the execution of the action and  $\textit{del}(a) \subseteq F$ are the delete effects that determine the fluents that will be set false by the execution of the action.
The result of executing an action at a state $s$, will be captured by a transition function \(\Gamma_{\mathcal{M}}, \text{ such that } 
\Gamma_{\mathcal{M}}(s,a) = (s \setminus \textit{del}(a)) \cup \textit{add}(a),~ if ~\textit{pre}(a) \subseteq s \). 
We overload the notation and also use $\Gamma_{\mathcal{M}}$ to capture execution of action sequence, such that 
\(\Gamma_{\mathcal{M}}(s, \langle a_1,...,a_k\rangle) = \Gamma_{\mathcal{M}}(\Gamma_{\mathcal{M}}(...\Gamma_{\mathcal{M}}(s,a_1)...,a_{k-1}), a_k)\). 
The problem of planning is to find a path from the initial state to a goal state. 
Specifically, a solution to a planning problem is called a plan, and it takes the form of a sequence of actions. An action sequence $\pi = \langle a_1,..., a_k\rangle$ is said to be a valid plan if $\Gamma_{\mathcal{M}}(I,\pi) \supseteq G$. Assuming each action has a unit cost, a plan is optimal if no shorter valid plans exist.

\section{\frmtitle\ Formalization}
\label{sec:exam-approach}
We assume a network consisting of a set of hosts/machines (uniquely identified by a label) laid out in a particular network topology. In $ACG$, we represent a network using a set of propositional facts referred to as attributes (that take in boolean values)\footnote{Multi-valued variables can easily be turned into boolean ones}. The attributes capture host-level information, including information about host labels, software, versions, configuration, functions, vulnerabilities, whether the attacker has access to it, etc. 
The network-level information, specifically, direct connection between nodes 
is denoted by a connectivity function $\phi_{\mathcal{C}}$.

\begin{defn}
    The attribute set $P_\mathcal{H}$ is a set of propositions that could be true for some host in a given network.  
    We denote the set of unique hosts and their labels using
    $\mathcal{H}$.
    The connectivity between hosts in a given network is captured by the connectivity function $\phi_{\mathcal{C}}: \mathcal{H}\times \mathcal{H}\rightarrow \{True, False\}$, where the function returns true for a pair of hosts if they are directly connected in the network.
\end{defn}

We represent an attack using pre-conditions and post-conditions.
Pre-conditions are conditions needed for the attack to succeed. Post-conditions are conditions satisfied after the
execution of the attack. An attack's pre and post-conditions are represented using subsets of host-level attributes. Note that each attack may be associated with multiple pre-conditions and might elicit different post-conditions based on the pre-conditions. 
This is one of the reasons why attack graphs are popularly modeled as AND-OR graphs~\cite{dewri_optimal_2007}. We formalize attacks as follows.
\begin{defn}
    Let $\mathcal{A}$ be the set of attacks associated with a network, where each attack $a \in \mathcal{A}$, is captured by a set of the form $a = \{ \langle \pratt{a}{1}, \postp{a}{1}, \postn{a}{1}\rangle, ... ,\langle \pratt{a}{k}, \postp{a}{k},$ $ \postn{a}{k}\rangle\}$, where for any attack tuple, $\pratt{a}{i} \subseteq P_\mathcal{H}$ captures the preconditions, $\postp{a}{i}\subseteq P_\mathcal{H}$ captures the positive postconditions (i.e., the set of attributes turned true), and $\postn{a}{i}\subseteq P_\mathcal{H}$ captures the negative postconditions (i.e., the set of attributes turned false).
\end{defn}
Our proposed hypergraph generalizes AND-OR graphs.
Pre-conditions are set by multiple nodes, whereas post-conditions are set
by only the attack node.
The nodes in this graph correspond to all the possible configurations a host can take, and the edges correspond to possible network connections.
\begin{defn}
    An \frmtitle\  $\mathcal{G}$ denoted as  $\mathcal{G} = \langle\mathcal{N}, \mathcal{E} \rangle$, where $\mathcal{N}$ are the nodes and $\mathcal{E}$ are the edges. Nodes represent hosts along with their attributes, namely, $\mathcal{N} = \mathcal{H} \times 2^{P_{\mathcal{H}}}$, and edges are either connectivity edges or attack hyperedges, that is, $\mathcal{E} = \mathcal{E}_\mathcal{C} \cup \mathcal{E}_\mathcal{A}$ where $\mathcal{E}_\mathcal{C}$ are the connectivity edges, and $\mathcal{E}_\mathcal{A}$ are the attack hyperedges. \\ 
    Connectivity edges $\mathcal{E}_\mathcal{C}$:  A connectivity edge $e \in \mathcal{E}_\mathcal{C}$ exists between nodes $n_1, n_2 \in \mathcal{N}$ where $n_1 = (h_1, P_1)$ and $n_2 = (h_2, P_2)$, if  $\phi_\mathcal{C}(h_1, h_2) = True$.  
    $\src(e) = \{n_1\}$ and $\dst(e) = \{n_2\}$ denote the source and destination of edge $e$ respectively. \\
    Attack hyperedges $\mathcal{E}_\mathcal{A}$:  An attack hyperedge $e \in \mathcal{E}_\mathcal{A}$ of the form $\src(e) = \{ n_1, ..., n_k\}$ and $\dst(e) = \{ n_i' \}$ exists iff there exists attack $a_i$ =  $\langle \pratt{a}{i}, \postp{a}{i}, \postn{a}{i}\rangle$, where
 \begin{itemize}
     \item  $\exists 
        n_i \in \src(e)$,  that corresponds to the same host as the one for $n_i' \in \dst(e)$, where $n_i = (h_i, P_i)$
     \item $\pratt{a}{i} \subseteq \bigcup_{n_j \in \src(e)} P_j$, where $n_j = (h_j, P_j)$
     \item for the node $n_i' = (h_i, P_i')$, we have $P_i' = (P_i \setminus \postn{a}{i}) \cup \postp{a}{i}$ 
     \item 
        $\forall h_j$ $\in$ $\src(e)$, $h_j = h_i$ or $\phi_{\mathcal{C}}(h_i, h_j) = True$ 
 \end{itemize}
\end{defn}
Note, $\mathcal{G}$ is exponential in the size of the set of attributes ($P_{\mathcal{H}}$). The nodes associate every possible proper subset of attributes to each host, capturing the changes that may be brought about in the node due to various attacks. Our graphs can also have cycles. 
We use this joint modeling to analyze the security posture and connectivity of the network simultaneously and do a what-if analysis of 
the network configuration. 
Note that, our formulation allows us to replace connectivity with other, more sophisticated notions of availability, including the availability of specific services. 

We now introduce the notion of transition function $\tfunc$ over the \frm\ .  
Given a set of nodes $S_N \subseteq \mathcal{N}$, and an edge $e \in \mathcal{E}$, the transition function describes the result of following that edge.
Specifically, we define it as:
\[ \tfunc(S_N, e) = 
\begin{cases}
    S_N \cup \dst(e)~\textrm{if}~\src(e)\subseteq   S_N~\textrm{and}~e \in \mathcal{E}_{\mathcal{C}}\\
    S_N \setminus (h_i, P_i) \cup \dst(e) ~\textrm{if}~\src(e)\subseteq   S_N,~\textrm{} e \in \mathcal{E}_{\mathcal{A}} \\
    ~\textrm{\textit{undefined}}~\textrm{otherwise}
\end{cases}\]
where $(h_i, P_i)$ is the original state of the host affected by the attack. 
We can also extend this definition to apply to a sequence of edges. We now define valid paths below.
\begin{defn}
    For a given \frm\ $\mathcal{G}$, and set of initial nodes $S_N^0$, an edge-sequence $\mathbf{E}= \langle e_1, ....., e_n\rangle$ is considered:
    \begin{itemize}
        \item A {\em valid attack path} to a target node attribute $(h_i, p_i)$ if and only if $(h_i, p_i) \in \bigcup_{(h, P) \in \tfunc(S_N^0, \mathbf{E})} \{(h,p)|p \in P\}$,
         \item A {\em valid connectivity path} to a target node $h_i$ if and only if $n \in \tfunc(S_N^0, \mathbf{E})$, such that $n = (h, P)$.
    \end{itemize}
\end{defn}

All analysis performed on our \frm\  will be represented in terms of the presence or absence of these two different paths. However, instead of performing the analysis directly on graph $\mathcal{G}$, we will first compile them into a planning problem and use fast planners to perform the analysis.
\begin{defn}
    For a given \frm\ $\mathcal{G}$, initial node set $S_N^0$, and target attribute $(h_t, p_t)$, a representative planning model is defined as $\mathcal{M}^\mathcal{G} = \langle F^\mathcal{G}, A^\mathcal{G}, I^\mathcal{G}, G^\mathcal{G}\rangle$, where
    \begin{itemize}
        \item $F^\mathcal{G}$ contains a fluent for each host attribute pair in $\mathcal{H}\times P_\mathcal{H}$, where we use function $\gamma^{\mathcal{M}}$ to capture the mapping from $\mathcal{H}\times P_\mathcal{H}$ to $F^\mathcal{G}$ (to simplify notation we will also allow the application of the function over sets).
        \item $A^\mathcal{G}$ contains an action for each edge in $e \in \mathcal{E}$. Let $a_e$ be the action corresponding to an edge $e$ with destination node $(h_i, P_i)$. Here the action precondition is given as $\textit{pre}(a_e) = \bigcup_{(h,P) \in \src(e)} \gamma^{\mathcal{M}}(\{(h,p)\mid p \in P\} )$. If it is an attack edge, corresponding to an attack tuple, $\langle \npratt, \npostp{}, \npostn{}\rangle$, then 
        $\textit{add}(a_e) = \gamma^{\mathcal{M}}(\{h_i\} \times\npostp{})$, and $\textit{del}(a_e) = \gamma^{\mathcal{M}}(\{h_i \} \times\npostn{})$. In the case of a connectivity edge, the add effect is given as $\textit{add}(a_e) = \gamma^{\mathcal{M}}(\{h_i \} \times P_i)$, and the delete is empty, i.e., $\textit{del}(a_e) = \emptyset$.
        \item $I^\mathcal{G} = \bigcup_{(h,P) \in S_N^0} \gamma^{\mathcal{M}}(\{h,p\mid p \in P\})$
        \item $G^\mathcal{G}  = \{\gamma^{\mathcal{M}}((h_t,p_t)\}$.
    \end{itemize}
\end{defn}

Now, we need to show that our model representation is in fact a lossless representation of the true \frm\ .

\begin{theorem}
\label{th1}
    For a given \frm\ $\mathcal{G}$, initial node set $S_N^0$, and target node $n_t$, the representative planning model $\mathcal{M}^\mathcal{G}$, is a sound and complete representations:
    \begin{enumerate}
        \item [C1] It is sound so far that any valid plan in $\mathcal{M}^\mathcal{G}$ must correspond to a valid connectivity or attack path in $\mathcal{G}$.
        \item [C2] It is complete so far that any valid connectivity or attack path in $\mathcal{G}$ must correspond to a valid plan in $\mathcal{M}^\mathcal{G}$.
    \end{enumerate}
\end{theorem}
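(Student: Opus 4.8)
The plan is to prove C1 and C2 together by building a step-by-step correspondence between edge sequences in $\mathcal{G}$ and action sequences in $\mathcal{M}^{\mathcal{G}}$ induced by the encoding map $\gamma^{\mathcal{M}}$. For a node set $S_N \subseteq \mathcal{N}$, write $\mu(S_N) := \gamma^{\mathcal{M}}(\{(h,p)\mid (h,P)\in S_N,\; p\in P\}) \subseteq F^{\mathcal{G}}$ for the planning state it induces. Three observations drive the argument: (i)~$\mu(S_N^0)=I^{\mathcal{G}}$ by definition of $I^{\mathcal{G}}$; (ii)~$\mu$ is monotone, so $S_N'\subseteq S_N$ implies $\mu(S_N')\subseteq\mu(S_N)$; and (iii)~for every edge $e$ we have $\textit{pre}(a_e)=\mu(\src(e))$ by construction, while $\textit{add}(a_e)$ and $\textit{del}(a_e)$ depend only on the destination host of $e$ and, for attack hyperedges, on the attack tuple, never on the full attribute sets of the source nodes.

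I would then establish the core \emph{simulation lemma}: if $\tfunc(S_N,e)$ is defined and $S_N$ is \emph{host-deterministic} (at most one node per host), then $a_e$ is applicable in $\mu(S_N)$, $\Gamma_{\mathcal{M}^{\mathcal{G}}}(\mu(S_N),a_e)=\mu(\tfunc(S_N,e))$, and $\tfunc(S_N,e)$ is again host-deterministic whenever $e$ is an attack hyperedge. This is a case split on the type of $e$. For a connectivity edge the effect side is immediate: $\textit{del}(a_e)=\emptyset$ and $\textit{add}(a_e)=\gamma^{\mathcal{M}}(\{h_i\}\times P_i)=\mu(\dst(e))$, so $\Gamma_{\mathcal{M}^{\mathcal{G}}}(\mu(S_N),a_e)=\mu(S_N)\cup\mu(\dst(e))=\mu(S_N\cup\dst(e))$. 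For an attack hyperedge with tuple $\langle\npratt,\npostp{},\npostn{}\rangle$ one checks that deleting $\gamma^{\mathcal{M}}(\{h_i\}\times\npostn{})$ and then adding $\gamma^{\mathcal{M}}(\{h_i\}\times\npostp{})$ in $\mu(S_N)$ yields precisely $\mu$ of the node set in which $(h_i,P_i)$ is replaced by $(h_i,(P_i\setminus\npostn{})\cup\npostp{})$; here injectivity of $\gamma^{\mathcal{M}}$ together with host-determinism ensure the surgery on host $h_i$'s fluents leaves every other host untouched. The precondition side is observation~(iii) plus monotonicity: $\src(e)\subseteq S_N$ gives $\textit{pre}(a_e)=\mu(\src(e))\subseteq\mu(S_N)$.

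Given the lemma, C2 follows by induction on the length of a valid path $\mathbf{E}=\langle e_1,\dots,e_n\rangle$ from $S_N^0$: the matching action sequence $\langle a_{e_1},\dots,a_{e_n}\rangle$ stays applicable and its reached states are exactly $\mu$ of the running node sets, so the final state contains $\gamma^{\mathcal{M}}((h_t,p_t))=G^{\mathcal{G}}$ precisely when the final node set holds a node with host $h_t$ and attribute $p_t$---which is the definition of a valid attack or connectivity path to $n_t$. For C1 one inducts on a valid plan $\langle a_1,\dots,a_n\rangle$: each $a_j$ equals $a_e$ for some edge $e$, and with $S_N^{j-1}$ the running node set (satisfying $\mu(S_N^{j-1})$ equal to the current planning state and $\textit{pre}(a_j)\subseteq\mu(S_N^{j-1})$) one picks a representative edge $\hat e_j$ whose source nodes are genuinely contained in $S_N^{j-1}$---such an edge exists because the attack precondition relevant to $a_j$, already covered by the smaller attribute sets of $e$'s sources, is a fortiori covered by the current, possibly larger, sets---and whose induced action is $a_j$; following $\hat e_j$ preserves the correspondence and produces a valid path to $n_t$.

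The step I expect to be the main obstacle is keeping the node-set and fluent-set views synchronized across \emph{connectivity} edges: since such an edge never deletes and its destination node may carry an attribute set unrelated to that host's current state, a naively reachable node set can contain two distinct nodes for the same host, at which point the attack-edge effect computation above no longer commutes with $\mu$. I would handle this either by proving a normalization invariant---every node set reached by a valid path from $S_N^0$ has the same $\mu$-image as some host-deterministic set, so we may always work with that representative---or by absorbing the difficulty into the C1 edge-repair step and stating C1/C2 in the equivalent form ``$\mathcal{M}^{\mathcal{G}}$ admits a valid plan reaching $\gamma^{\mathcal{M}}((h_t,p_t))$ iff $\mathcal{G}$ admits a valid attack or connectivity path to $n_t$.'' A final routine point to record is that the compilation adds no spurious actions (exactly one per edge, none without an edge) and no spurious goal facts, so neither side can succeed for reasons absent in the other.
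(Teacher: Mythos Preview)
Your approach is essentially the paper's: both rest on the one-to-one edge/action correspondence and argue by induction that executability and effects match step by step, so paths and plans translate into each other. The paper gives only a two-sentence sketch of this idea, whereas you spell out the state-encoding map $\mu$, a simulation lemma, and the inductive mechanics; in that sense you are filling in exactly the details the paper omits rather than taking a different route. Your flagging of the host-determinism issue for connectivity edges is a genuine technical point the paper's sketch (and arguably its formalization of $\tfunc$ on attack edges) glosses over; your proposed normalization-invariant fix is the natural way to keep the argument honest.
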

\begin{proof}[Proof Sketch]
    The proof follows from the observation that each action directly corresponds to an edge. An action is only executable in a state if the fluents corresponding to the source nodes of the equivalent edge are present in the state. The result of executing an action is making the fluent corresponding to the destination nodes true.
    Similarly, we can prove completeness by showing that each path through the $\mathcal{G}$ can be mapped over to the plans in $\mathcal{M}^\mathcal{G}$.
\end{proof}

This equivalence also allows us to prove the complexity of finding a path through \frm:
\begin{theorem}
    Given the attribute set $P_{\mathcal{H}}$, the set of host $\mathcal{H}$, connectivity function $\phi_\mathcal{C}$, and attack set $\mathcal{A}$, 
    \begin{enumerate}
        \item The problem of valid attack path existence in the corresponding \frm\ is PSPACE-complete.
        \item The problem of valid connectivity path existence in the corresponding \frm\ is polynomial.
    \end{enumerate}
\end{theorem}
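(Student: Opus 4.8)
\medskip
\noindent\textbf{Proof plan.}
Throughout I treat the initial node set $S_N^0$ and the target (an attribute $(h_t,p_t)$ in case~1, a host $h_t$ in case~2) as part of the input, since a ``valid attack/connectivity path'' is only meaningful relative to them. I would dispose of the polynomial claim first. A connectivity edge from $(h_1,P_1)$ to $(h_2,P_2)$ exists in $\mathcal{G}$ exactly when $\phi_\mathcal{C}(h_1,h_2)=True$, irrespective of $P_1,P_2$, and applying $\tfunc$ along any edge never deletes a host from the current node set --- attack edges merely swap one configuration of a host for another, connectivity edges only add. Consequently the set of hosts represented in $\tfunc(S_N^0,\mathbf{E})$ is monotone along an edge-sequence, a host becomes represented precisely when it is $\phi_\mathcal{C}$-reachable from an already-represented host, and so a valid connectivity path to $h_t$ exists iff $h_t$ is reachable from $\{h\mid(h,P)\in S_N^0\}$ in the ordinary directed graph $(\mathcal{H},\phi_\mathcal{C})$; one breadth-first search settles this in time polynomial in $|\mathcal{H}|$, proving part~2.

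\medskip
\noindent\textbf{Part~1, membership in PSPACE.}
By Theorem~\ref{th1} a valid attack path to $(h_t,p_t)$ exists in $\mathcal{G}$ iff the representative model $\mathcal{M}^\mathcal{G}$ admits a plan, so it suffices to place plan existence for $\mathcal{M}^\mathcal{G}$ in PSPACE. The complication is that $\mathcal{M}^\mathcal{G}$ has exponentially many actions (one per edge of the exponentially large $\mathcal{G}$), so the textbook PSPACE bound for propositional planning does not apply to it directly. But its fluent set $F^\mathcal{G}=\gamma^{\mathcal{M}}(\mathcal{H}\times P_\mathcal{H})$ has size only $|\mathcal{H}|\cdot|P_\mathcal{H}|$, so every state is a polynomial-size object, and I would show that the one-step successor relation on states is polynomial-time checkable \emph{without enumerating actions}: an attack step is obtained by guessing an attack $a\in\mathcal{A}$, a tuple index $i$, and the attacked host $h$, checking that each $p\in\pratt{a}{i}$ is currently held by $h$ or by some $\phi_\mathcal{C}$-neighbour of $h$ (which certifies an applicable attack hyperedge), and then deleting $\gamma^{\mathcal{M}}(\{h\}\times\postn{a}{i})$ and adding $\gamma^{\mathcal{M}}(\{h\}\times\postp{a}{i})$; connectivity steps are checked analogously and more cheaply. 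A nondeterministic machine guesses the plan one action at a time, storing only the current state, the target, and $O(\log)$-bit counters; this is polynomial space, so Savitch's theorem ($\mathrm{NPSPACE}=\mathrm{PSPACE}$) puts the problem in PSPACE.

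\medskip
\noindent\textbf{Part~1, PSPACE-hardness.}
I would reduce from plan existence for propositional STRIPS, which is PSPACE-complete. Given a STRIPS instance $\langle F,A,I,G\rangle$, build an \frm\ with a single host $h_0$ and $\phi_\mathcal{C}(h_0,h_0)=False$, so $\mathcal{G}$ has no connectivity edges; take $P_\mathcal{H}=F\cup\{g^{*}\}$ for a fresh symbol $g^{*}$, $S_N^0=\{(h_0,I)\}$, and target $(h_0,g^{*})$; for each $a\in A$ include a single-host attack with the one tuple $\langle\pre{a},\add{a},\del{a}\rangle$, and add one ``goal'' attack with tuple $\langle G,\{g^{*}\},\emptyset\rangle$. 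Each such tuple is a legal attack hyperedge over the lone host (the ``source host connected to the attacked host'' clause is vacuous, the preconditions all live on $h_0$), and following it via $\tfunc$ performs exactly the corresponding STRIPS transition on $h_0$'s configuration; conversely every STRIPS step is mirrored by one of these hyperedges. Hence a valid attack path to $(h_0,g^{*})$ exists iff the STRIPS instance has a plan for $G$, and the construction is plainly polynomial. Note that this reduction genuinely exploits the negative postconditions $\postn{a}{}$ --- the relaxed-monotonicity feature emphasized earlier --- since without delete effects plan existence would drop to polynomial.

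\medskip
\noindent\textbf{Anticipated obstacle.}
The single real difficulty is the exponential blow-up: neither $\mathcal{G}$ nor $\mathcal{M}^\mathcal{G}$ may be materialized, so the PSPACE upper bound must be run against an implicitly described transition system --- the crux is verifying that hyperedge applicability and hyperedge effects are computable in polynomial time on the polynomial-size fluent encoding, and checking that the deliberately permissive connectivity edges are still simulated faithfully by that test. The lower bound is comparatively routine once one notices that a single host with no self-connectivity already realizes the full expressiveness of propositional STRIPS.
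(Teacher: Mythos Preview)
Your proposal is correct and shares the paper's high-level strategy --- leverage Theorem~\ref{th1} and the planning correspondence for part~1, exploit the delete-free nature of connectivity for part~2 --- but you are more careful on two fronts. For PSPACE membership the paper's sketch simply cites Bylander's bound applied to $\mathcal{M}^\mathcal{G}$; you correctly flag that $\mathcal{M}^\mathcal{G}$ as defined carries one action per hyperedge of $\mathcal{G}$ and hence (at least) exponentially many actions, so the off-the-shelf bound does not transfer to the original input size. Your repair --- guess the attack $a$, tuple index $i$, and attacked host $h$, then verify applicability by checking each $p\in\pratt{a}{i}$ against the current fluents of $h$ and its $\phi_\mathcal{C}$-neighbours --- is exactly the succinct-transition argument needed to close that gap. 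For part~2 you give a direct BFS reachability argument on $(\mathcal{H},\phi_\mathcal{C})$, which is more elementary than the paper's appeal to polynomial plan existence in delete-free STRIPS and makes explicit that the attribute space $P_\mathcal{H}$ plays no role. Your hardness reduction is essentially the paper's single-host construction, with the added precision of fixing $\phi_\mathcal{C}(h_0,h_0)=False$ to rule out spurious connectivity self-loops; the paper merely remarks that connectivity is ``irrelevant'' with one host.
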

\begin{proof}[Proof Sketch]
In the case of attack paths, the membership proof, i.e., the problem is within the class of PSPACE, is given by Theorem \ref{th1}, and the fact that the plan existence problem is PSPACE-complete for positive pre-condition STRIPS planning \cite{bylander_1994_computational}. The model described above belongs to that class, proving the membership. 

Regarding hardness, we need to show that for any planning problem, we can polynomially reduce it to a problem of identifying an attack path. 
Particularly, we identify an attribute set $P_{\mathcal{H}}$, a set of hosts $\mathcal{H}$, a connectivity function $\phi_\mathcal{C}$, and attack set $\mathcal{A}$, such that an attack path exists in the corresponding \frm\  $\mathcal{G}$, only if there is a plan in the corresponding model.
Here, we can set the attributes to be equivalent to the fluents in the original model, have a single host (thus, the connectivity function is irrelevant), and the attack definition is equivalent to the action definitions. This means the transition function \tfunc\ is exactly equal to the transition function of the corresponding planning problem. As such, there is only an attack path if there is a corresponding plan. In the most general case, the planning problem could have a multi-fluent goal description. We can translate that into the single host attribute by introducing a new attack, whose pre-conditions are the attributes corresponding to the original goal, and the post-condition is a new attribute used as the target.

Finally, for the connectivity path, the corresponding actions have an empty delete effect. Note, the existence of connectivity paths can be performed by only considering delete-free actions. It has been shown that the problem of identifying plan existence in the presence of delete-free actions is polynomial in complexity \cite{hoffmann_2001_ff}.
\end{proof}

\section{{\tool}:\toolName} 
\label{sec:\tool}
We now look at how \frm\ can be used as a basis to evaluate security posture for network analysis.
The framework is instructable in that, given a starting network, the administrator, can instruct the system to identify ways to update the network so that it meets certain specified requirements.
Once the system finds a set of changes, it will present these solutions to the administrator that meet the specified requirements. 
Below, we discuss how to specify these requirements, how to find a set of network updates that will ensure the satisfaction of specified requirements, and how one can explain why these updates help achieve the requirements.
\subsection{Metrics for Analysis}
In this case, we assume that the user requirements are specified with respect to some metrics related to potential attack paths and connectivity paths. The user may have a goal
of having the metrics take some specific value. Before we get into the details, we provide a general definition of  \frm\ metric. 
\begin{defn}
        An analytic metric for \frm\ is given as a pair of functions $\langle \mathcal{F}_{\mathcal{A}}, \mathcal{F}_{\mathcal{C}} \rangle$. 
        For a given graph $\mathcal{G}$, a set of initial nodes $\mathbb{S}_N^0 = \{ S_1^0, ..., S_k^0\}$, 
        a set of attack targets $\mathbb{S}_t^{\mathcal{A}}$ and a set of connectivity targets $\mathbb{S}_t^{\mathcal{C}}$, 
        $\mathcal{F}_{\mathcal{A}}$ takes the space of all valid attack paths between initial sets from $\mathbb{S}_N^0$ to targets in $\mathbb{S}_t^{\mathcal{A}}$ (i.e., $\mathbb{E}_{\mathcal{A}}$) as input and returns a real number ($\mathcal{F}_{\mathcal{A}}: \mathbb{E}_{\mathcal{A}} \mapsto j,  j\in \mathbb{R}$). Similarly, $\mathcal{F}_{\mathcal{C}}$ takes as input the set of all connectivity paths ($\mathbb{E}_{\mathcal{C}}$) to connectivity targets and returns a real number.
\end{defn}
Our use of initial node sets and target sets allows the analytic metric to be applied to analyze a set of attack and service conditions.
As defined currently, the metrics may include any functions. They become more meaningful when we consider desirable properties for the network.
\begin{defn}
    An analytic metric pair $\langle \mathcal{F}_{\mathcal{A}}, \mathcal{F}_{\mathcal{C}} \rangle$, is called an \rmetric\, if $\mathcal{F}_{\mathcal{A}}$ is a monotonically increasing function with respect to decreasing attack paths and   $\mathcal{F}_{\mathcal{C}}$ is monotonically increasing function with respect to increasing connectivity paths, Or
    \begin{itemize}
        \item $\mathcal{F}_{\mathcal{A}}(\mathbb{E}_1) \geq \mathcal{F}_{\mathcal{A}}(\mathbb{E}_2)$, if and only if, $\mathbb{E}_1 \subseteq \mathbb{E}_2$.
        \item $\mathcal{F}_{\mathcal{C}}(\mathbb{E}_1) \geq \mathcal{F}_{\mathcal{C}}(\mathbb{E}_2)$, if and only if, $\mathbb{E}_1 \supseteq \mathbb{E}_2$.
    \end{itemize}
\end{defn} 
We see that $\mathcal{F}_{\mathcal{A}}$ increases as attack paths are removed  and $\mathcal{F}_{\mathcal{C}}$ increases with connectivity. This formulation naturally lends itself to a multi-objective framework. 
However, we consider simpler settings where $\mathcal{F}_{\mathcal{C}}$ is effectively an indicator function. In particular, we consider two specific {\rmetric}s. The first captures cases where we do not want to allow any attacks on the target, and the second tries to increase the attack hardness (measured in terms of the number of attacks to be carried out using the shortest attack path). In each case, we make sure that there exists at least one connectivity path. For notational convenience, we define the connectivity function as $\mathcal{F}_{\mathcal{C}}^{1}$:
\[\mathcal{F}_{\mathcal{C}}^{1} (\mathbb{E}_{\mathcal{C}}) = \begin{cases}
        1, ~\textrm{if there exists a path to each connectivity target}~\\
        0, ~\textrm{otherwise}
        \end{cases}\]
\begin{defn}
    An impenetrability metric is given by the pair $\langle \mathcal{F}_{\mathcal{A}}^{\mathcal{I}}, \mathcal{F}_{\mathcal{C}}^{1} \rangle$, such that
    \[\mathcal{F}_{\mathcal{A}}^{\mathcal{I}} (\mathbb{E}_{\mathcal{A}}) = \begin{cases}
        1, ~\textrm{if}~ |\mathbb{E}_{\mathcal{A}}|=0\\
        0, ~\textrm{otherwise.}
    \end{cases}\]
\end{defn}

\begin{defn}
    An attack difficulty metric is given by the pair $\langle \mathcal{F}_{\mathcal{A}}^{\mathcal{D}}, \mathcal{F}_{\mathcal{C}}^{1} \rangle$, such that
    \[\mathcal{F}_{\mathcal{A}}^{\mathcal{D}} (\mathbb{E}_{\mathcal{A}}) = \begin{cases} \min \{|\mathbf{E}|\mid \mathbf{E} \in \mathbb{E}_{\mathcal{A}}\} ~\textrm{if}~|\mathbb{E}_{\mathcal{A}}|>0\\
    |\mathcal{N}|+1, ~\textrm{otherwise.}
    \end{cases}
    \]
    where $|\mathcal{N}|$ is an upper bound on the longest possible attack path length.
\end{defn}

Now, we will see how we can take a given network and identify changes that would allow the network to meet the requirements defined with respect to these metrics.

\subsection{Identifying Diverse Model Updates}
\label{search}
In this section, our focus is to identify ways in which we can modify a given graph to ensure that the specified metric takes a given value. 
For the two metrics that were defined, the primary way to improve them is to remove potential attack paths while making sure that changes leave, at the very least, a single connectivity path. To keep the discussion relatively simple, we focus on cases where we have a single set of initial state nodes, a single attack target, and a connectivity target. However, the methods discussed are easily extensible to multiple sets of initial and target sets. None of the identified theoretical properties change for the more general case.

As mentioned earlier, we will perform such analysis using the equivalent representative planning model. To perform such analyses, we need to formalize the notion of model update. We start this formalization by first defining a model parameterization function that converts a given model to a set of features. In particular, we follow the conventions set in earlier model reconciliation papers \cite{model-rec-aij,model-rec-complexity} and define a function $\delta$ as follows.

\begin{defn}
The model parameterization function  $\delta$ maps a given model $\mathcal{M} = \langle F, A, I, G \rangle$ to a subset of propositions $\mathbb{P}$ (henceforth referred to as model parameters), where
\begin{align*}
  \mathbb{P} =~& \{\textit{init-has-f}\mid f \in F\} \cup\{\textit{goal-has-f}\mid f\in F\}\ \cup\\
  & \bigcup_{a \in A}\{a\textit{-has-prec-}f, a\textit{-has-add-}f, a\textit{-has-del-}f  \mid f \in F\}.
\end{align*}
The parameterization function $\delta(\mathcal{M})$  is defined as
\begin{align*}
    \tau_{I} &= \{ \textit{init-has-f}\mid f \in I\}\\
    \tau_{G} &= \{ \textit{goal-has-g}\mid g \in G\}\\
    \tau_{\textit{pre}(a)} &= \{ a\textit{-has-prec-}f \mid f \in \textit{pre}(a)\}\\
    \tau_{\textit{add}(a)} &= \{ a\textit{-has-add-}f \mid f \in \textit{add}(a)\}\\
    \tau_{\textit{del}(a)} &= \{ a\textit{-has-del-}f \mid f \in \textit{del}(a)\}\\
    \tau_a &= \tau_{\textit{pre}(a)} \cup  \tau_{\textit{add}(a)} \cup  \tau_{\textit{del}(a)}\\
    \tau_{A} &= \bigcup_{a\in A^\mathcal{M}} \tau_a\\
	\delta(\mathcal{M}) &= \tau_{I} \cup \tau_{G} \cup \tau_{A}
\end{align*}
\end{defn}

We use notation $\delta^{-1}$ to map subsets of $\mathbb{P}$ to possible model. To keep the potential space of model updates finite, we only consider changes that remove components and their parameters from the model representation. We examine model updates that disallow previously applicable plans, in other words, changes that constrain the model. We define a constrained version of a model as follows.
\begin{defn}
    A model $\mathcal{M}' = \langle F',I',A',G'\rangle$ is  a constrained version of a model $\mathcal{M}$, if $\delta(\mathcal{M}') \subseteq \delta(\mathcal{M})$ and there exists no action sequence $\pi$ such that $\Gamma_{\mathcal{M}}(I, \pi) \not\subseteq G$ and $\Gamma_{\mathcal{M}'}(I', \pi) \not\subseteq G'$.
\end{defn}

However, finding a potentially constrained version of a model by looking at potential model changes can be an expensive process.
Previous works try to simplify this search by assuming access to a limited set of hardening actions \cite{model-rec-aij, caglar_2024_can, keren_2019_goal}. 
In most network design settings, the cost of potentially updating a part of the model, hence changing the network, may be under-specified and, in some cases, unspecified.
This means we will need to consider the possibility of changing any and all parts of the network (and, by extension, the corresponding model description).
Unfortunately, for most realistic networks, the possible space of such changes is too large to be explored.
However, we show that by focusing on the problem of identifying constrained versions of the model, we can ignore all possible model updates except the ones that are part of the set, called {\em constraining changes set} or $\kappa(\mathcal{M}) = \tau_{I} \cup (\bigcup_{a\in A} \tau_{\textit{add}(a)})$. Now we demonstrate the effectiveness of $\kappa$ by showing that one can only create a constrained version of a model by using changes that are part of the constraining changes set.

\begin{prop}
If a model $\mathcal{M}' = \langle F',I',A',G'\rangle$ is a constrained version of $\mathcal{M} = \langle F,I,A,G\rangle$ such that, $F' = F$ and $\delta(\mathcal{M}') \subseteq \delta(\mathcal{M})$, then $\delta(\mathcal{M}) \setminus \delta(\mathcal{M}') \subseteq \kappa(\mathcal{M})$.
\end{prop}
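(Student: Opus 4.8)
The plan is to argue by contraposition. Suppose $\mathcal{M}' = \langle F',I',A',G'\rangle$ is a constrained version of $\mathcal{M} = \langle F,I,A,G\rangle$ with $F'=F$ and $\delta(\mathcal{M}')\subseteq\delta(\mathcal{M})$, and assume, for contradiction, that some dropped model parameter $p \in \delta(\mathcal{M})\setminus\delta(\mathcal{M}')$ lies outside $\kappa(\mathcal{M}) = \tau_{I} \cup \bigcup_{a\in A}\tau_{\textit{add}(a)}$. Reading off the definitions of $\mathbb{P}$ and of $\kappa(\mathcal{M})$, the parameters of $\mathcal{M}$ that do \emph{not} belong to $\kappa(\mathcal{M})$ are exactly those of the form \textit{goal-has-g} with $g\in G$, \textit{a-has-prec-f} with $f\in\textit{pre}(a)$, or \textit{a-has-del-f} with $f\in\textit{del}(a)$. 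So $p$ has one of these three forms, and the matching component ($g$ in the goal, $f$ in $\textit{pre}(a)$, or $f$ in $\textit{del}(a)$) is present in $\mathcal{M}$ but absent from $\mathcal{M}'$. I would then, in each of the three cases, exhibit an action sequence $\pi$ that is a valid plan in $\mathcal{M}'$ but not in $\mathcal{M}$, contradicting the defining property of a constrained version (intuitively, that constraining the model cannot create plans that did not exist before).

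All three cases rest on one observation specific to positive-precondition STRIPS: weakening a goal, weakening a precondition, or shrinking a delete list can only make \emph{more} action sequences executable and goal-reaching, since having extra fluents true never blocks anything. Thus: (i) if \textit{goal-has-g} was dropped, $G'$ is strictly weaker than $G$, and a $\mathcal{M}'$-plan that reaches $G'$ without incidentally making $g$ true is a plan of $\mathcal{M}'$ but not of $\mathcal{M}$; (ii) if \textit{a-has-prec-f} was dropped, $a$ can fire in $\mathcal{M}'$ at a state where it was blocked in $\mathcal{M}$, and a plan that exploits this and then rides the monotonicity of positive preconditions through to the target separates the two models; (iii) if \textit{a-has-del-f} was dropped, $f$ survives the application of $a$ in $\mathcal{M}'$ and can license a later precondition or the goal, again yielding a plan valid only in $\mathcal{M}'$. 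In every case the direction of the edit --- ``ask for less, or keep more true'' --- is precisely the complement of what $\kappa(\mathcal{M})$ permits, which is the structural content of the statement; the syntactic bookkeeping is otherwise routine once one uses that $\delta$ determines the transition semantics of a model, in the same spirit as the correspondence underlying Theorem \ref{th1}.

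The step I expect to be genuinely delicate is showing that the separating plan always \emph{exists}. The construction stalls when the dropped non-$\kappa$ parameter is redundant in $\mathcal{M}$: a precondition fluent $f$ of $a$ that is already guaranteed true in every reachable state where $a$ could otherwise fire, an ``inert'' deleted fluent that is no action's precondition and not in $G$ (in the extreme, one that $a$ also adds), or a goal fluent $g$ entailed by every achiever of $G\setminus\{g\}$; and, relatedly, when $\mathcal{M}$ or $\mathcal{M}'$ is unsolvable for reasons unrelated to $p$, or when $\mathcal{M}'$ drops further $\kappa$-parameters that interfere with the intended witness. In such cases dropping $p$ changes no plan, so no witness exists, and making the proposition go through cleanly seems to require either (a) an additional hypothesis that the parameterization carries only ``relevant'' parameters (the notion standard in the model-reconciliation work this paper follows) together with a solvability assumption, or (b) a slightly weaker but equally useful reading established by an ``add-back'' argument: for any constrained version $\mathcal{M}'$, re-inserting every dropped non-$\kappa$ parameter yields a model $\mathcal{M}''$ which, by the monotonicity of the previous paragraph, has no more valid plans than $\mathcal{M}'$ (hence is still a constrained version) and satisfies $\delta(\mathcal{M})\setminus\delta(\mathcal{M}'')\subseteq\kappa(\mathcal{M})$. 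Pinning down which reading is intended, and stating the accompanying hypotheses precisely, is where I would concentrate the effort.
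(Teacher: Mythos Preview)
Your core intuition---that in positive-precondition STRIPS, removing a precondition, a delete effect, or a goal fluent can only \emph{relax} the model---is exactly the paper's. The paper's entire proof is the single sentence: ``This follows directly from the properties of models with positive pre-condition only. There is no case where a previously executable plan will be made inexecutable after removing a pre-condition, delete effect, or goal.'' So your write-up is already considerably more detailed than what the paper offers.

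There is, however, a subtle mismatch of direction worth flagging. The paper's sentence establishes that non-$\kappa$ removals never \emph{eliminate} plans. Your contraposition needs the stronger claim that a non-$\kappa$ removal \emph{creates} a new plan (so that $\mathcal{M}'$ fails to be constrained). You correctly identify that this stronger direction does not always go through: redundant preconditions, inert delete effects, goal fluents entailed by the rest of $G$, and interference with simultaneously dropped $\kappa$-parameters are all genuine obstructions to producing a witness $\pi$. The paper simply does not engage with these cases; its one-line justification is, strictly speaking, an argument for a different (and weaker) statement than the proposition as written.

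Your option (b)---the ``add-back'' argument that any constrained version $\mathcal{M}'$ can be replaced by a constrained version $\mathcal{M}''$ with $\delta(\mathcal{M})\setminus\delta(\mathcal{M}'')\subseteq\kappa(\mathcal{M})$---is both rigorous and exactly what the paper actually \emph{uses} downstream (Algorithm~\ref{algo1} only ever explores edits in $\kappa(\mathcal{M})$). That reformulation is the right target, and the monotonicity observation in your second paragraph proves it directly without needing any separating witness.
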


This follows directly from the properties of models with positive pre-condition only. There is no case where a previously executable plan will be made inexecutable after removing a pre-condition, delete effect, or goal.

Note that our objective here is not just to find any constrained model but rather one that disallows potential attack paths while still preserving paths to nodes of interest. So, we find such models for both metrics by performing a search over the set of possible updates. We do this by employing a modified form of $A^*$-search 
, a systematic heuristic search. $A^*$ is guaranteed to return an optimal solution (i.e., the least costly set of model updates) as long as the heuristic is an optimistic estimate of the cost of the remaining part of the solution. Specifically, the $A^*$ search maintains a priority queue that is ordered by the sum of the cost of the current search node and the heuristic value associated with getting to a goal node.
In our case, the search node corresponds to a potential model you can obtain by applying the possible changes. The cost relates to the cost of changes already applied, and the heuristic approximates the cost of changes that need to be further applied to reach a model where some pre-specified goal condition related to the metrics is met. At any point in the search, the algorithm expands the node with the smallest possible value. If the selected node is a goal node, the search ends; if not, it adds all successor nodes (in our case, generated by applying available model updates), which are then added to the priority queue.  
In terms of updates to the search algorithm, we introduce changes that are applicable to both metrics, while still maintaining the guarantees provided by $A^*$. We also propose a new heuristic function that will help speed up the search in any context and focus on model updates that constrain the original model. 

For the heuristic, we start by assuming that all model updates have a uniform cost of one. This means that for any set of search nodes that is not the goal node, the heuristic is automatically admissible if the value is less than or equal to one (after all, it would take at least one more model update for it to meet the goal condition). However, setting all nodes to have equal heuristic values does not allow us to find more effective solutions. We would want to set lower heuristic values for more promising nodes. Specifically, for a given search node, we will generate a set of attack paths and see which one eliminates the most attack paths among all the successor nodes. 
Specifically, we will give a heuristic value proportional to the number of still valid attack paths for each successor. 
While one would ideally want to consider all attack paths, this may not be computationally feasible as we need to perform this on every node expanded by the search algorithm. Thus, we restrict it to some $K$ attack paths. Algorithm \ref{heur} presents a pseudo-code for calculating these heuristic values, and we show the effectiveness of this heuristic through empirical evaluation. 

\begin{algorithm}[tbp!]
\scriptsize
\caption{Heuristic function}
\label{heur}
\vspace{2pt}
\SetKwInOut{Input}{Input}
\SetKwInOut{Output}{Output}
\Input { $\hat{\mathcal{M}}$, plan\_list}
\Output { heuristic\_value}
\vspace{2pt} 
\vspace{2pt}
h $\leftarrow$ $|$plan\_list$|$;\\
\For{$a\_plan \in plan\_list $} {
\If{a\_plan is not valid in $\hat{\mathcal{M}}$ }{
$h -= 1$\;
}
}
heuristic\_value $\leftarrow$ h / $|plan\_list|$\; 
\textbf{return}  heuristic\_value
\end{algorithm}

The next optimization we introduce into the algorithm is node pruning. In particular, we want to avoid adding nodes to our priority queue that are guaranteed never to lead to a goal. Note that at this point, we have yet to define our goals precisely.
Regardless of the exact form the function related to attack paths takes, in each case, 
we would want $\mathcal{F}_{\mathcal{C}}^{1}$ to return a value of one. This means we can prune out any search nodes that cannot lead to models where there are connectivity paths.

\begin{prop}
Let $\mathcal{M}$ be an equivalent representation of \frm\ $\mathcal{G}$, an initial node set $S_N^0$ and a set of connectivity target $S_t^{\mathcal{C}}$, such that there exists no valid connectivity path for the graph, then there exists no constrained version of $\mathcal{M}$, where the corresponding \frm\ model has a valid connectivity path.
\end{prop}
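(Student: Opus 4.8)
The plan is to lift the argument to the representative planning model and reduce it to a short monotonicity (simulation) property of constrained versions; the \frm-level statement then follows from Theorem~\ref{th1}. Concretely, let $\mathcal{M} = \langle F, A, I, G\rangle$ be the representative model of $\mathcal{G}$ and let $G_{\mathcal{C}} \subseteq F$ be the fluents that $\gamma^{\mathcal{M}}$ associates with the connectivity targets $S_t^{\mathcal{C}}$; the argument uses only that $G_{\mathcal{C}}$ is a fixed fluent set, so which nominal goal $\mathcal{M}$ carries is immaterial. By Theorem~\ref{th1}, ``$\mathcal{G}$ has no valid connectivity path to $S_t^{\mathcal{C}}$'' is equivalent to ``no action sequence $\pi$ satisfies $\Gamma_{\mathcal{M}}(I,\pi) \supseteq G_{\mathcal{C}}$,'' and for any constrained version $\mathcal{M}' = \langle F',A',I',G'\rangle$ the \frm\ it represents has a valid connectivity path iff some $\pi$ satisfies $\Gamma_{\mathcal{M}'}(I',\pi) \supseteq G_{\mathcal{C}}$. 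So it suffices to show: if such a $\pi$ exists for $\mathcal{M}'$, then one exists for $\mathcal{M}$.

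Next I would extract the structural form of a constrained version. Taking $F' = F$ as in the preceding proposition (model edits remove parameters, not fluents), that proposition gives $\delta(\mathcal{M}) \setminus \delta(\mathcal{M}') \subseteq \kappa(\mathcal{M}) = \tau_{I} \cup \bigcup_{a\in A}\tau_{\textit{add}(a)}$. Hence $I' \subseteq I$ and $A' \subseteq A$, and for every $a \in A'$ we have $\textit{pre}'(a) = \textit{pre}(a)$, $\textit{del}'(a) = \textit{del}(a)$, and $\textit{add}'(a) \subseteq \textit{add}(a)$: preconditions and delete lists are untouched, and only positive information (initial facts and add effects) is shrunk.

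The technical heart, though short, is the simulation lemma: whenever $s' \subseteq s$ and $\pi = \langle a_1,\dots,a_n\rangle$ is executable from $s'$ in $\mathcal{M}'$, then $\pi$ is executable from $s$ in $\mathcal{M}$ and $\Gamma_{\mathcal{M}'}(s',\pi) \subseteq \Gamma_{\mathcal{M}}(s,\pi)$. This is an induction on $n$: $\textit{pre}(a_1) = \textit{pre}'(a_1) \subseteq s' \subseteq s$ makes $a_1$ executable in $\mathcal{M}$, and from $s' \subseteq s$, $\textit{del}'(a_1) = \textit{del}(a_1)$, and $\textit{add}'(a_1) \subseteq \textit{add}(a_1)$ we get $(s' \setminus \textit{del}'(a_1)) \cup \textit{add}'(a_1) \subseteq (s \setminus \textit{del}(a_1)) \cup \textit{add}(a_1)$, so the induction hypothesis applies to the successor states. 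Instantiating with $s' = I'$ and $s = I$ (valid since $I' \subseteq I$), a witnessing $\pi$ for $\mathcal{M}'$ yields $\Gamma_{\mathcal{M}}(I,\pi) \supseteq \Gamma_{\mathcal{M}'}(I',\pi) \supseteq G_{\mathcal{C}}$, i.e.\ a witnessing $\pi$ for $\mathcal{M}$, hence a valid connectivity path in $\mathcal{G}$ by Theorem~\ref{th1} --- contradicting the hypothesis.

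The main obstacle is not the core induction but the translation at the two ends: a constrained $\mathcal{M}'$ need not literally be $\mathcal{M}^{\mathcal{G}'}$ for a textbook \frm\ $\mathcal{G}'$ (e.g.\ trimming the add effects of an attack edge describes a different attack tuple), so the proof should stay at the level of reachability of $G_{\mathcal{C}}$ in the planning models and only invoke Theorem~\ref{th1} at the first and last steps. Two small checks finish it: an action dropped entirely by $\mathcal{M}'$ must have had empty precondition and delete list (since those parameters do not lie in $\kappa(\mathcal{M})$), so dropping it only reduces reachability and is consistent with the simulation lemma; and attack edges --- whose delete effects are exactly what a constrained version may \emph{not} remove --- are handled correctly because the invariant $s' \subseteq s$ is preserved by a delete-then-add update precisely when the delete list is unchanged and the add list only shrinks.
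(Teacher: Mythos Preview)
Your proof is correct and lands on the same key fact the paper uses --- that a constrained version can never introduce plans that were absent in the original model --- so a missing connectivity path stays missing. The paper simply asserts this in one line as ``a constraining model update never adds new plans, and a valid plan must exist for any valid connectivity path,'' whereas you derive it explicitly from Proposition~1 via the simulation lemma; your route is more detailed but not genuinely different in spirit, and it has the minor advantage of making clear that the monotonicity holds for \emph{any} target fluent set $G_{\mathcal{C}}$, not just the nominal goal $G$ baked into the definition of ``constrained version.''
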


This proposition directly follows from the fact that a constraining model update never adds new plans, and a valid plan must exist for any valid connectivity path. This result tells us that we can ignore any successors of a search node where no valid path to a connectivity target exists. 

Now that we have covered two modifications that are designed to exploit the structure of the problem to improve the effectiveness of the search process, we 
extend $A^*$ to handle one of the shortcomings of the setting, namely missing modification costs. $A^*$ is generally deployed in settings where each step that can be taken from a search node has a well-defined cost. All the theoretical guarantees of $A^*$ are contingent on access to a specified cost function. In our setting, each potential step corresponds to some update in the model, which further corresponds to a change in the underlying network. This might include changes ranging from potentially patching software to potentially removing network connections. Humans are generally known to be bad at approximating costs and utilities in general \cite{booth_2023_perils}, however, such problems are made worse in settings like network administration, where there are many competing interests.
For example, even a small change in updating a software version may involve retraining a number of users.  

Given these considerations, it is unlikely that the system typically has access to a fully defined cost function associated with the possible network modifications. As such, the search would not be able to identify a single optimal modification set.
Instead, we focus on identifying options that meet the required goal conditions.
These options can then be presented to the defender, who can then compare the modifications being proposed by each option and select one they might believe is the most reasonable one. 
When coming up with these options, our primary focus would be to choose solutions that make use of diverse network modifications.
The use of diverse solution sets as a means of addressing incompletely specified cost functions is a strategy that has been shown to be effective in many contexts \cite{ghasemi_2021_multiple}. 
In this setting, by choosing diverse modifications, we increase the chance that one of the solutions consists of modifications that the user may deem to be less costly.

In particular, we will come up with a set of updates such that no set of model updates is a subset of another. We will call such a set of model updates a non-overlapping set.
\begin{defn}
\label{defn:set}
    A set $\mathbb{C} = \{\mathcal{C}_1,..., \mathcal{C}_k\}$ is a {\em non-overlapping set} of model updates, if there exists no $1\leq j\leq k$ and $j\neq i$, such that $\mathcal{C}_i \subseteq \mathcal{C}_j$.
\end{defn}

We will generate such a non-overlapping set by extending $A^*$ to identify multiple solutions. Effectively, each time a goal condition is met, it will get added to a set of solutions, and no successor node that is a super-set of an already found solution is added to the search priority queue. Here, we impose a search budget that ensures the search exits after a certain number of search nodes are expanded. 
Algorithm \ref{algo1} presents the modified $A^*$ search that incorporates all the changes discussed above. 

\begin{prop}
   Algorithm \ref{algo1} will always return a solution set that is non-overlapping.
\end{prop}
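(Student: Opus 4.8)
The plan is to argue by induction on the sequence of solutions added to the solution set $\mathbb{C}$ during the run of the modified $A^*$ search, using the pruning rule as the key invariant. First I would fix notation: let $\mathcal{C}_1, \mathcal{C}_2, \ldots$ denote the solutions in the order the algorithm discovers them (i.e., the order in which goal nodes are popped from the priority queue), and let $\mathbb{C}_m = \{\mathcal{C}_1, \ldots, \mathcal{C}_m\}$ be the partial solution set after $m$ solutions have been found. The claim to establish is that $\mathbb{C}_m$ is non-overlapping in the sense of Definition \ref{defn:set} for every $m$, and hence the final returned set is as well. The base case $m \le 1$ is immediate since a singleton set is trivially non-overlapping.

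For the inductive step, assume $\mathbb{C}_{m-1}$ is non-overlapping and consider the moment $\mathcal{C}_m$ is added. I would split the required condition into two directions. (a) No previously found $\mathcal{C}_j$ ($j < m$) satisfies $\mathcal{C}_j \subseteq \mathcal{C}_m$: this is exactly what the pruning rule enforces — recall that once a goal is found, "no successor node that is a super-set of an already found solution is added to the search priority queue." Since search nodes correspond to sets of applied model updates and successors only add parameters (model updates only remove components, so the set of applied changes grows monotonically along any search branch), any node whose change-set is a proper (or improper) superset of some $\mathcal{C}_j$ is never enqueued after $\mathcal{C}_j$ is recorded; therefore the node yielding $\mathcal{C}_m$ cannot have such a $\mathcal{C}_j$ as a subset of its change-set. (b) No $\mathcal{C}_m \subseteq \mathcal{C}_j$ for $j < m$: here I would use the fact that $A^*$ with a non-negative cost on model updates (each update costs at least one, by the uniform-cost assumption) expands nodes in non-decreasing order of accumulated cost, so $|\mathcal{C}_j| \le |\mathcal{C}_m|$ whenever $j < m$; combined with $\mathcal{C}_m \ne \mathcal{C}_j$ (distinct search nodes / distinct change-sets), a strict inclusion $\mathcal{C}_m \subsetneq \mathcal{C}_j$ would force $|\mathcal{C}_m| < |\mathcal{C}_j|$, contradicting the expansion order. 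Putting (a) and (b) together gives that $\mathbb{C}_m$ is non-overlapping, closing the induction.

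The main obstacle I anticipate is pinning down direction (b) cleanly, because it relies on a subtle interaction between the $A^*$ ordering and the pruning: strictly speaking one must rule out that a later-expanded node has a change-set that is a subset of an earlier solution. I would handle this by observing that the search tree is structured so that every node's change-set is obtained by accumulating updates from $\kappa(\mathcal{M})$ along the path from the root, hence change-sets along a branch are totally ordered by inclusion, and two incomparable branches give incomparable (or equal) change-sets only if they share a common prefix — so a subset relation $\mathcal{C}_m \subseteq \mathcal{C}_j$ with $\mathcal{C}_m$ found later would mean $\mathcal{C}_m$'s node lies on the path to $\mathcal{C}_j$'s node, i.e., was expanded earlier, a contradiction. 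A secondary subtlety is the treatment of ties in the priority queue (equal $f$-values): these do not affect the argument since the inclusion/cardinality reasoning only needs non-decreasing order, not strict order. If the algorithm as written in Algorithm \ref{algo1} allows re-expansion of nodes with different change-sets representing the same model, I would additionally remark that we may canonicalize on change-sets, which does not weaken the conclusion since the statement concerns the update sets $\mathcal{C}_i$ themselves.
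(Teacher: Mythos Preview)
Your two-direction decomposition is exactly the paper's own argument: it asserts that ``the search is always guaranteed to test potential solutions in the order of increasing cardinality'' (your direction (b)) and that given this ``we only need to ensure that any future solution is not a superset of a previous solution'' (your direction (a)), appealing implicitly to the pruning rule. So at the structural level you match the paper.

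That said, your argument for (a) has a real gap, one the paper's two-sentence sketch also does not close. You correctly note that no superset of an already-found $\mathcal{C}_j$ is \emph{enqueued after} $\mathcal{C}_j$ is recorded. But Algorithm~\ref{algo1} performs the subset test only when generating successors, not when popping; you must therefore also rule out supersets of $\mathcal{C}_j$ that were pushed into the fringe \emph{before} $\mathcal{C}_j$ was popped. This can happen: take $a,b\in\kappa(\mathcal{M})$ where $\{a\}$ is not a goal but happens to invalidate every plan in the sampled list (so $h=0$), while $\{b\}$ is a goal (hence $h=0$). Both have priority~$1$; if the tie goes to $\{a\}$, expanding it enqueues $\{a,b\}$ while \texttt{solution\_list} is still empty. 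Next $\{b\}$ is popped and recorded. Later $\{a,b\}$ is popped, and for the impenetrability metric it is also a goal whenever removing $a$ preserves connectivity --- it gets added, producing the overlapping pair $\{b\}\subset\{a,b\}$. Your inference ``therefore the node yielding $\mathcal{C}_m$ cannot have such a $\mathcal{C}_j$ as a subset'' does not follow without either a pop-time subset check or a separate argument that such pre-enqueued supersets are impossible.

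Two smaller points on (b). First, $A^*$ pops in non-decreasing $f$-order, not non-decreasing accumulated cost, so ``$|\mathcal{C}_j|\le|\mathcal{C}_m|$ whenever $j<m$'' is not justified by the sentence you wrote; what actually rescues the claim (for impenetrability) is that every goal node has heuristic~$0$, so \emph{solutions} specifically satisfy $f=|\mathcal{C}|$ and are popped in cardinality order --- say this explicitly. Second, your fallback tree-path argument (``$\mathcal{C}_m$'s node lies on the path to $\mathcal{C}_j$'s node'') is incorrect: the same change-set can be reached along different branches, and a subset $\mathcal{C}_m\subsetneq\mathcal{C}_j$ need not appear on the particular branch by which $\mathcal{C}_j$ was reached.
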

This proposition holds true, as the search is always guaranteed to test potential solutions in the order of increasing cardinality. This means we only need to ensure that any future solution is not a superset of a previous solution to ensure that the solution set satisfies Definition \ref{defn:set}.

\begin{algorithm}[tbp!]
\scriptsize
\caption{Search for non-overlapping set}
\label{algo1}
\vspace{2pt} 
\SetKwInOut{Input}{Input}
\SetKwInOut{Output}{Output}
\Input { $\mathcal{M} = \langle F, A, I, \mathbb{G}_C, \mathbb{G}_R \rangle$, budget $k$}
\Output { Solution Set $\hat{\mathbb{K}}$}
\vspace{2pt} 
fringe $\leftarrow$ \texttt{PriorityQueue()}\;
c\_list $\leftarrow \langle\rangle$ {\em (Closed list)}\;
cnt $\leftarrow 0$\;
solution\_list$ \leftarrow \langle\rangle$ \;
$\text{fringe.push}(\{\}, 0)$\;
\vspace{2pt} 
\While{\text{fringe not empty} \& $cnt \leq k$}{
\vspace{2pt}
$cnt += 1$\;
$\hat{\mathcal{C}} \leftarrow \text{fringe.pop}()$\;
$\hat{\mathcal{M}} \leftarrow \delta(\mathcal{M})\setminus \hat{\mathcal{C}}$
\If{goal condition met for $\hat{\mathcal{M}}$ } {
solution\_list.add($\hat{\mathcal{C}}$)\;
}
 c\_list.add($\hat{\mathcal{C}}$)\;
\vspace{2pt} 
\For{$f \in \kappa(\mathcal{M})$} {
\If{$\{f\} \cup \hat{\mathcal{C}}$ has no subset in $\hat{\mathbb{C}}$ and there exists a connectivity path}{
$\hat{\mathcal{C}}' \leftarrow  \{f\} \cup \hat{\mathcal{C}}$;

$\hat{\mathcal{M}}' \leftarrow \delta(\mathcal{M})\setminus \hat{\mathcal{C}}'$;

$\Pi_{\mathcal{C}} \leftarrow \texttt{FindTopAttackPlans}(\mathcal{M})$

$\text{fringe.push}(\hat{\mathcal{C}}', |\mathcal{C}'| + \textrm{heuristic}(\hat{\mathcal{M}}',\Pi_{\mathcal{C}} ))$\;
}
}
}

\end{algorithm}

\subsubsection{Goal Condition}
Now, one can adapt Algorithm \ref{algo1} for each individual metric by changing the goal condition. For the impenetrability metric, the most intuitive condition would be to have both $\mathcal{F}_{\mathcal{A}}^{\mathcal{I}}$ and  $\mathcal{F}_{\mathcal{C}}^{\mathcal{1}}$ return one.
To identify such models, we need to check that the model does not return any plan for the attack targets and that there still exists a plan for the connectivity target. This can be extended to multiple initial states and goals either by checking pairwise or by using compilation techniques similar to the one discussed below.

However, for attack difficulty, we have a slightly more complex goal condition. We need to make sure that $\mathcal{F}_{\mathcal{A}}^{\mathcal{D}}$ meets certain threshold. 
The particularly interesting case here is one where we have multiple initial state sets and multiple targets.
This means that we need a way of finding the shortest possible attack path from a set of possible starting states to the set of possible attack targets. 
We will do so by employing a planning compilation that will automatically identify this path.

Our objective here is to build on our planning model to support finding the shortest possible path from multiple initial state sets and multiple targets. 
We will represent this updated model as $\CompiledModel = \langle F^{\CompiledSymb}, A^{\CompiledSymb}, I^{\CompiledSymb}, G^{\CompiledSymb}\rangle$.
The new fluents $F^{\CompiledSymb}$ is given as
\( F^{\CompiledSymb} =  F \cup 
\{\texttt{init\_change\_mode}\} \cup 
\{\texttt{act\_mode}\} \cup \{\texttt{goal\_reached}\}  \)
Here $F^{\CompiledSymb}$ includes all the original fluents, plus three new fluents. 
First, we have the proposition {\texttt{init\_change\_mode}}, which enables the planner to select the starting initial state set. Next, the fluent {\texttt{act\_mode}} indicates that the planner can only select the attack actions, and finally, the fluent {\texttt{goal\_reached}} indicates that the intended target has been achieved.
Our new set of actions  $A^{\CompiledSymb}$ is as follows:
\(A^{\CompiledSymb} =  A^O \cup A^I \cup A^G \). \\
This set comprises the following components. Firstly, it includes all the original actions of the attacker, denoted as $A^O$, but with an added pre-condition {\texttt{act\_mode}}. This pre-condition is necessary to ensure that the attack actions are only executed in the appropriate mode of operation. Furthermore, we introduce a new set of actions, $A^I$, specifically designed to transition the state to one of the potential initial states. Another subset of actions, $A^G$, is introduced to trigger the proposition {\texttt{goal\_reached}} when the goal conditions of the attacker are met.
More specifically, for all possible initial states $S_i$ we have corresponding action $a^{I}_i = \langle \textit{pre}(a^{I}_i),  \textit{add}(a^{I}_i), \textit{del}(a^{I}_i)\rangle$,  such that
\begin{align*}
\textit{pre}(a^{I}_i) = \{\texttt{init\_change\_mode}\},\\
\textit{add}(a^{I}_i) = \gamma^{\mathcal{M}}(S_i) \cup \{\texttt{act\_mode}\},\\ \textit{del}(a^I_i) = \{\texttt{init\_change\_mode}\}
\end{align*}
The delete effect in the initial state action ensures that it can only select an initial state once and prevents the planner from possibly creating infeasible paths by adding elements from other initial states. Once the initial state action is executed, the fluent \texttt{act\_mode} is made true, and now the planner can execute the attack actions.

For all possible reachable attack targets $(h_i, p_i)$, we introduce new actions $a^G_i = \langle \textit{pre}(a^G_i),  \textit{add}(a^G_i), \textit{del}(a^G_i)\rangle$ with the corresponding definition
\begin{align*}
\textit{pre}(a^G) = {\gamma^{\mathcal{M}}(h_i, p_i)},\\
\textit{add}(a^G) = \{\texttt{goal\_reached}\}, \textit{del}(a^G) = \{\}
\end{align*}

The initial state consists of only the {\texttt{init\_change\_mode}} fluent
\[I^\CompiledSymb ={\{\texttt{init\_change\_mode}\}}\]
Lastly, the goal becomes
\[G^\CompiledSymb =  \{\texttt{goal\_reached}\}\]
An optimal plan for such a model will be one in which the planner chooses an initial state and goal pair that results in the attack plan with minimal cost.

\begin{prop}
    For a given graph $\mathcal{G}$, a set of initial state sets $\mathbb{S}^0_\mathcal{N}$ and attack targets, $\mathbb{S}^\mathcal{A}_t$, such that a valid attack path exists, the value returned by $\mathcal{F}^\mathcal{D}_\mathcal{A}$, is equal to the cost of the optimal plan for the corresponding compiled planning model $\CompiledModel = \langle F^{\CompiledSymb}, A^{\CompiledSymb}, I^{\CompiledSymb}, G^{\CompiledSymb}\rangle$.
\end{prop}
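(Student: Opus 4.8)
The plan is to prove the equality by two matching inequalities, in each direction relating plans of the compiled model $\CompiledModel$ to valid attack paths in $\mathcal{G}$, using Theorem~\ref{th1} as the bridge between attack paths in $\mathcal{G}$ and plans in the base model $\mathcal{M}^\mathcal{G}$. Throughout I would adopt the natural convention (implicit in the remark that an optimal plan should ``result in the attack plan with minimal cost'') that the auxiliary actions in $A^I$ and $A^G$ carry zero cost, so that a plan's cost equals the number of original attack actions (those in $A^O$) it contains; if instead all actions have unit cost, the statement holds with a fixed additive offset of $2$, since every goal-reaching plan contains exactly one $A^I$ action and may be taken to contain exactly one $A^G$ action.

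First I would establish the skeletal structure of any plan $\pi$ for $\CompiledModel$. Because $I^{\CompiledSymb} = \{\texttt{init\_change\_mode}\}$, this fluent is the unique precondition of every $a^I_i$, and it appears in no add-effect, the first action of $\pi$ must be some $a^I_i$; and since $a^I_i$ deletes \texttt{init\_change\_mode}, exactly one $A^I$ action occurs in $\pi$ and it is first. Executing it yields $\gamma^{\mathcal{M}}(S_i)\cup\{\texttt{act\_mode}\}$, and since \texttt{act\_mode} is a fresh fluent it is added here and never deleted; it gates every action of $A^O$. The $A^G$ actions add only \texttt{goal\_reached} and delete nothing, so they leave the original fluents untouched; hence the subsequence of $A^O$ actions in $\pi$, say $\langle b_1,\dots,b_m\rangle$, evolves the original fluents exactly as in $\mathcal{M}^\mathcal{G}$ started from $\gamma^{\mathcal{M}}(S_i)$, with each $b_\ell$ applicable precisely when its $\mathcal{M}^\mathcal{G}$-precondition holds. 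Finally $\pi$ reaches the goal \texttt{goal\_reached} only via some $a^G_j$, which requires the target fluent $\gamma^{\mathcal{M}}(h_j,p_j)$ with $(h_j,p_j)\in \mathbb{S}^{\mathcal{A}}_t$ to hold at that point.

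With this in hand I would prove both bounds. For $\ge$: take an optimal plan $\pi^\ast$ of cost $c^\ast$, so $c^\ast$ equals the number of its $A^O$ actions; the prefix of those actions up to the first $a^G_j$ is a valid plan of $\mathcal{M}^\mathcal{G}$ for target $\gamma^{\mathcal{M}}(h_j,p_j)$ from $\gamma^{\mathcal{M}}(S_i)$, so by Theorem~\ref{th1} it yields a valid attack path in $\mathcal{G}$ from $S_i^0\in\mathbb{S}^0_{\mathcal{N}}$ to $(h_j,p_j)\in\mathbb{S}^{\mathcal{A}}_t$ of length at most $c^\ast$, whence $\mathcal{F}_{\mathcal{A}}^{\mathcal{D}}\le c^\ast$. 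For $\le$: let $\mathbf{E}=\langle e_1,\dots,e_m\rangle$ be a shortest valid attack path, with $m=\mathcal{F}_{\mathcal{A}}^{\mathcal{D}}$, from some $S_i^0\in\mathbb{S}^0_{\mathcal{N}}$ to some $(h_j,p_j)\in\mathbb{S}^{\mathcal{A}}_t$; by Theorem~\ref{th1} the sequence $\langle a_{e_1},\dots,a_{e_m}\rangle$ is a valid plan of $\mathcal{M}^\mathcal{G}$ reaching $\gamma^{\mathcal{M}}(h_j,p_j)$, so $\langle a^I_i, a_{e_1},\dots,a_{e_m}, a^G_j\rangle$ is a valid plan of $\CompiledModel$ of cost $m$, giving optimal cost $\le \mathcal{F}_{\mathcal{A}}^{\mathcal{D}}$. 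The two bounds yield the claimed equality.

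The step I expect to be the main obstacle is the structural argument of the second paragraph: rigorously ruling out stray or interleaved $A^I$/$A^G$ actions that would break the clean translation between $A^O$-subsequences and attack paths, and making the cost bookkeeping airtight (the zero-cost convention, or equivalently the constant offset). Once the normal form of compiled plans is nailed down, the remaining work is a routine application of Theorem~\ref{th1} in each direction.
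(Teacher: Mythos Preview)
Your proposal is correct and follows the same underlying approach as the paper: argue that every plan in $\CompiledModel$ decomposes into one initial-state selection, a run of original attack actions, and a target selection, so the optimal planner is forced to minimize the attack-action segment, which by Theorem~\ref{th1} corresponds to a shortest attack path. The paper's own argument is a brief informal paragraph asserting exactly this (``every valid plan in this domain must include a single initial state selection action and a target selection action \dots\ the only place the planner can optimize is in the cost of the attack path''); your version supplies the structural normal-form argument and the two explicit inequalities that the paper leaves implicit. You are also right to flag the cost-convention issue: the paper assumes unit-cost actions in its background but does not address the resulting constant offset of $2$ in this proposition, so your remark about zero-cost auxiliary actions (or a fixed additive offset) is a genuine clarification rather than a deviation.
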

In the compiled model, the planner has the freedom to choose the initial state and the goal state, but it must always select one. 
Since we are using the optimal planner, it will always choose the initial and goal state that will result in the cheapest possible attack path. Because every valid plan in this domain must include a single initial state selection action and a target selection action. As such, the only place the planner can optimize is in the cost of the attack path. Thus it returns $\min \{|\mathbf{E}|\mid \mathbf{E} \in \mathbb{E}_{\mathcal{A}}\}$.

In this case, the goal check simply involves calling this compilation. If the cost of the minimal plan is greater than or equal to the threshold, then the identified solution is added to the solution list. 
\section{The \tool Toolset}
\label{sec:architecture}

The overview of the \tool tool is illustrated in Figure \ref{architecture}. \tool is integrated with an off-the-shelf network vulnerability scanner that generates a list of vulnerabilities and their descriptions. 
\begin{figure}[!t]
\centering
\includegraphics[width=\columnwidth]{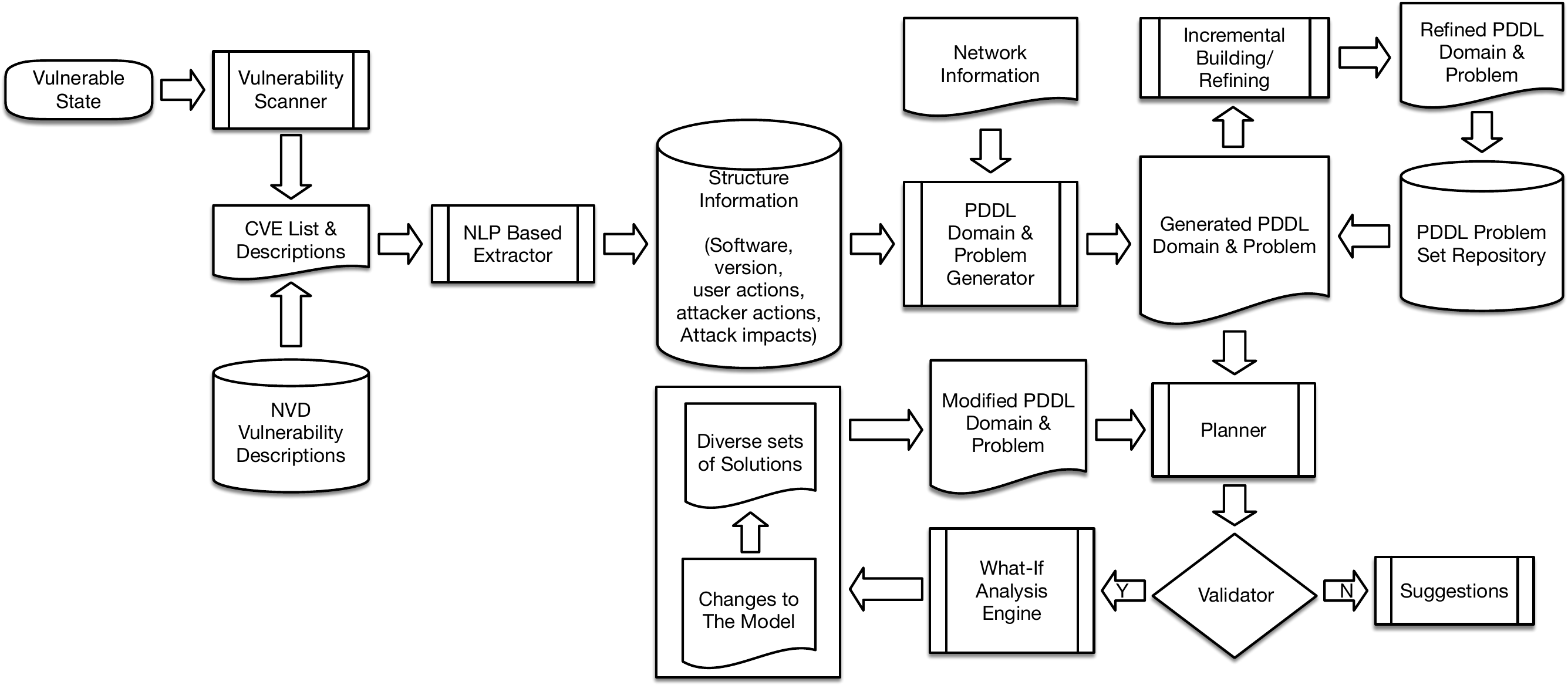}
\caption{\tool Tool Architecture.}
\Description{Overview or \tool.}
\label{architecture}
\end{figure}

\begin{figure}[!t]
  \centering
  \includegraphics[width=\columnwidth]{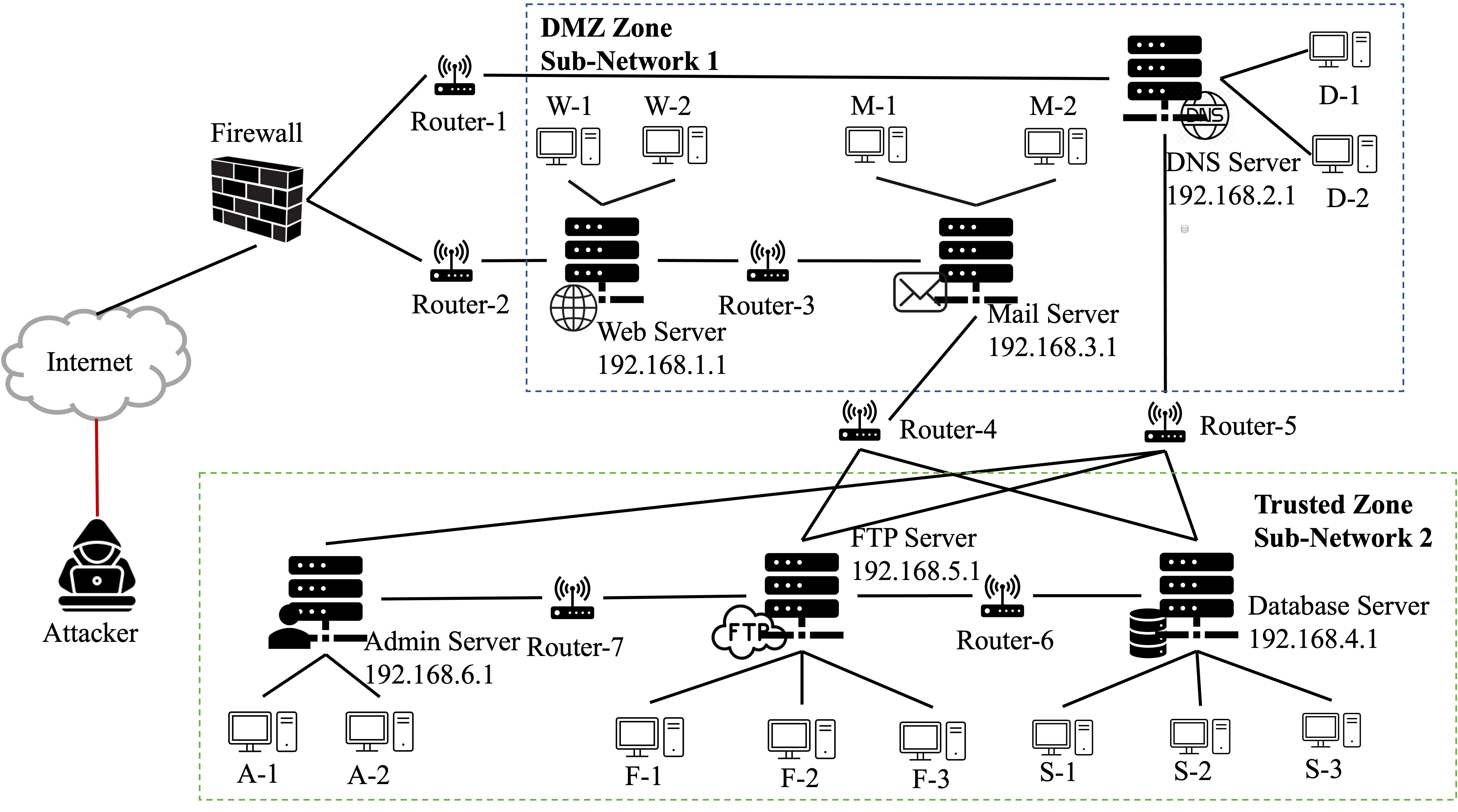}
  \caption{Topology of Test Network.}
  \Description{Topology of Test Network.}
  \label{fig:plan}
\end{figure}

Figure~\ref{fig:plan} shows the test network we used in this study. We created the test network using various virtual environments and tools to simulate real-world scenarios. This setup included virtual machines, network configurations, and firewall rules. The network is divided into two subnetworks: Sub-Network 1 (demilitarized zone) that contains DNS, Mail \& Web Server, and Sub-Network 2 (trusted zone) that contains FTP, Database \& Admin Server running various services.
TABLE \ref{tab:vul} (in Appendix B) lists the vulnerabilities present in this network. Before \tool is used, we gather information on network topology \& connectivity, vulnerabilities, and services. \tool will create the \asg\ based on this information. A visual representation of the \asg\ is shown in Figure \ref{fig:acg-graph} in Appendix A.

 We collected detailed system and network data, categorized as host-level and connectivity attributes. These were transformed into a formal representation using a domain and problem generator, enabling automated planning.
Once generated, both the domain and problem files are introduced to a \textit{Validator}. This \textit{Validator} scrutinizes the plan for a designated goal. A valid plan signifies the existence of a pathway that an attacker could exploit to compromise the target or achieve a specific goal state. If such pathways or plans are discerned, the domain and problem files are then relayed to a \textit{what-if-analysis} engine.
For \textit{impenetrability metric}, the analysis engine conducts an exhaustive exploration, by constrained model updates, to find all potential sets of changes that would obstruct any plans leading to the goal state. Each derived set of changes culminates in a solution, which translates into a revised domain and problem file. For each unique set of changes, or, in other words, for each distinct solution, the \textit{Planner} indicates that the original plans are invalid. Delving deeper, the \textit{Validator} identifies and delineates the specific initial states where the actions prove ineffective. Lastly, the \textit{Validator} genrate suggestions which include both the failed plan and its associated unsuccessful actions, to providing users with insights into the modifications required within network. 
For, \textit{attack difficulty metric}, we are finding the set of changes for which the minimum cost for the attacker to reach the target node increases in the \textit{Planner}. Considering that the weakness of a network depends on its most vulnerable attack path, this analysis significantly increases the difficulty for attackers within a specific network.

Here we are introducing \textit{human-in-the-loop} concepts where the administrator would have all possible sets of diverse changes and their corresponding solutions. They can choose any possible change. Also, if there are any new modifications in the network, we introduced continuous building and refining options for updating the previous model copies.

\section{What-if Analysis Using \tool}
\label{sec:case-study}
In this section, we will discuss how our tool can be used in the context of the network presented in Figure~\ref{fig:plan} to perform \textit{what-if analysis}. The reader may want to refer to the corresponding \frm\ shown in Figure \ref{fig:acg-graph} in Appendix A.
\subsection{Impenetrability Analysis}
\label{unsolvable}
 Our first analysis is based on \textit{impenetrability metric} (see Definition 8). Typically, the system administrator starts by selecting the network node they would like to focus on, say the Admin Server. Since \tool\ models the attack-connectivity graph as a planning model, we can readily use existing solvers to present the users with a set of diverse attack vectors of different qualitative characteristics (via forbid-iterative). In this case, we found 21 different sets of attack paths to compromise the Admin Server. 

However, the goal of our tool is to go beyond such shallow analysis and help the user directly identify ways in which attack paths to the node can be disrupted while ensuring that the node is still accessible from other nodes in the network.
Using our model space search technique, \tool\ identifies unique ways in which the security stances of the node in question may be hardened. Each of these different updates may pose unique challenges to implement, which may be hard to implement but easier for a domain expert to assess. In this case, the search identifies a set of diverse possible changes shown in Figure~\ref{changes}.

\begin{figure}[h!]
    \fontsize{9}{10}\selectfont
    \centering
        \begin{minipage}{\linewidth}
            \begin{lstlisting}
; SPEAR Tool Output:
Changes -> {'has-initial-state-admin-driver-1 gmetad-slash-server-dot-c', 'has-initial-state-admin-driver-2 mpm-event-worker-or-prefork'}
Changes -> {'has-initial-state-adminserver-system-1-config ganglia-3-dot-1-dot-1', 'has-initial-state-admin-driver-2 mpm-event-worker-or-prefork'}
Changes -> {'has-initial-state-vul-admin-function-1 process_path-function', 'has-initial-state-admin-driver-2 mpm-event-worker-or-prefork'}
Changes -> {'has-initial-state-adminserver-system-2-config apache-http-server-2-dot-4-dot-17-to-2-dot-4-dot-38', 'has-initial-state-admin-driver-1 gmetad-slash-server-dot-c'}
            \end{lstlisting}
        \end{minipage}
    \caption{Changes suggested by \tool Tool.}
    \Description{Changes suggested by \tool Tool Topology of Test Network}
    \label{changes}
\end{figure}

Note that, given the level of abstraction used in the modeling, the system is able to give suggestions at the level of potential changes in network topology or specific vulnerabilities. The question of how to implement these changes is left to the administrator. Once presented with the options, the administrator is free to consider each option and validate it by implementing those network changes.

If the administrator selects $2^{\text{nd}}$ change (transalted in natural language by \tool) -- \textit{Upgrade the ganglia software's version anything other than v3.1.1 to a more recent version, and modify the MPM event worker or prefork driver} -- shown in Fig.~\ref{changes}, then the above attack plans would fail.
These case study proves that the \textit{impenetrability metric}, $\mathcal{F}_{\mathcal{A}}^{\mathcal{I}} (\mathbb{E}_{\mathcal{A}}) = 1$ as there are no valid attack paths ($|\mathbb{E}_{\mathcal{A}}|=0$) to compromise Admin Server. The administrator is now free to investigate other potential attack paths or even sketch a potential attack path and see how it might fail in the context of a given model update. The number of changes is not limited to 4, as the planner can generate more diverse sets of suggested changes for other nodes that could potentially disrupt the attacker from reaching the Admin Server.
By opting for any of these model changes, all the attack paths targeting Admin Server are now invalid but the connectivity paths or services to that Admin Server are still valid ($\mathbb{E}_{\mathcal{C}}$ is valid), so the connectivity function, $\mathcal{F}_{\mathcal{C}}^{1} (\mathbb{E}_{\mathcal{C}}) = 1$. Figure \ref{fig:connectivity} shows a valid plan to reach the Admin Server when the model changes are applied and there exists no plan to compromise the Admin Server.

\begin{figure}[h!]
    \fontsize{9}{10}\selectfont
    \centering
        \begin{minipage}{\linewidth}
            \begin{lstlisting}
(has_access_to_mail_server web-server mail-server internet gateway-192-168-3-1 p-25 smtp)
(has_access_to_ftp_server_via_mail_server   mail-server ftp-server internet gateway-192-168-5-1 p-20 tcp)
(has_access_to_admin_server_via_ftp_server ftp-server admin-server internet gateway-192-168-4-1 p-1311 https)
            \end{lstlisting}
        \end{minipage}
    \caption{Plan (Connectivity Path $ \mathbb{E}_{\mathcal{C}}$) to reach Admin Server.}
    \Description{Plan (Connectivity Path $ \mathbb{E}_{\mathcal{C}}$) to reach Admin Server when there are no Attack Path.}
    \label{fig:connectivity}
\end{figure}

When conducting our \textit{what-if analysis}, we factored in the ``human-in-the-loop'' perspective. For the sake of simplification, we initially assigned a unit cost for each change. However, in real-world applications, this unit cost might vary based on the nature and impact of the change.
For instance, while the tool might view all sets of changes as having the same unit costs, the actual implications for an organization or administrator might differ significantly. For example, for changes to \textit{MPM event worker or prefork} and \textit{Ganglia version 3.1.1}, any changes to the former might be perceived as more cumbersome and thus more costly than updating the latter to a newer version.
Consequently, the administrator might prefer to implement the $2^\text{nd}$ set of changes over the $3^\text{rd}$ set, even if its unit cost is the same. This decision underscores the importance of accounting for the nuanced and often unpredictable nature of human decision-making in our analysis.

\subsection{Attack Difficulty Analysis}
\label{minpath}

The other notable application of the \tool tool is its capability to augment the minimum cost associated with all possible initial and target states from an attacker's perspective. In this analysis, we are focusing on metric 2 or \textit{attack difficulty metric} (Definition 9). Considering that a network's robustness is contingent upon its weakest attack path, this analysis effectively elevates the challenge for attackers in a given network. Suppose an administrator aims to escalate the minimum cost of the attack path under various potential scenarios, which constitute possible initial states for attackers:
\begin{enumerate}
    \item Attacker has access to the DNS server's vulnerabilities.
    \item Attacker has access to the Mail server's vulnerabilities.
    \item Attacker has access to SQL server's vulnerabilities.
\end{enumerate}

Based on these initial states we found that the attacker can compromise the Admin Server, provided the attacker's initial state is established subsequent to the exploitation of CVE-2017-14491 (i.e., attack path $ \mathbb{E}_{\mathcal{A}}$)  with an \textit{attack difficulty metric},  $\mathcal{F}_{\mathcal{A}}^{\mathcal{D}} (\mathbb{E}_{\mathcal{A}}) = 3$. Now, let us assume the administrator seeks to identify model updates that would result in a minimum increase of 2 unit costs. 
\begin{figure}[!ht]
    \fontsize{9}{10}\selectfont
    \centering
        \begin{minipage}{\linewidth}
            \begin{lstlisting}
; SPEAR Tool Output:
Changes -> {'has-initial-state-has-adminserver-system-2-config apache-http-server-2-dot-4-dot-17-to-2-dot-4-dot-38'}
Changes -> {'has-initial-state-has-admin-driver-2 mpm-event-worker-or-prefork'}
Changes -> {'has-initial-state-has-dns-driver-2 request-handling', 'has-initial-state-has-dnsserver-system-1-config dnsmasq-before-2-dot-78'}
Changes -> {'has-initial-state-has-dnsserver-system-1-config dnsmasq-before-2-dot-78', 'has-initial-state-has-dnsserver-system-2-config windows-dns-server'}
            \end{lstlisting}
        \end{minipage}
    \caption{Changes suggested by \tool to increase the minimum cost of the given initial states and all possible targets.}
    \Description{Changes suggested by \tool to increase the minimum cost of the given initial states and all possible targets.}
    \label{changes_2}
\end{figure}

By employing our model space search technique, the administrator can select the modification depicted in Figure \ref{changes_2} to guarantee an increase (at least 2 unit cost) in the cost of compromising one of the possible targets. Specifically, if an administrator implements the first set of suggestion shown in Figure \ref{changes_2}, the minimum cost of the attack path ($\mathbb{E}_{\mathcal{A}}$) increases, and the new \textit{attack difficulty metric},  $\mathcal{F}_{\mathcal{A}}^{\mathcal{D}} (\mathbb{E}_{\mathcal{A}}) = 5$. Note, in this scenario, the minimum cost of the attack path is now associated with compromising the Mail Server following the implementation of the model updates. 
This method ensures that all previously optimal attack plans become invalid due to the increased cost. Consequently, any new plans developed post-update will incur a cost higher than that of the earlier plans.

\section{Empirical Evaluation}
\label{sec:evaluation}
Our primary focus is to evaluate the performance characteristics of \tool changed with respect to the complexity of network and attacker goals. 
We ran our experiment on a 3.1 GHz Dual-Core Intel Core i5 processor, with 8 GB 2133 MHz, and LPDDR3 memory.

\begin{table}[!ht]
\caption{Comparison of  Impenetrability (M1) and Attack Difficulty (M2) Analysis, with and without Heuristics.}
\centering
\begin{footnotesize}
\begin{tabular}{|c|rrll|}
\hline
\multicolumn{1}{|l|}{\multirow{2}{*}{\textbf{ \# Nodes}}} & \multicolumn{4}{c|}{\textbf{Time (s)}}   \\ 
        \cline{2-5} 
\multicolumn{1}{|l|}{} &
  \multicolumn{1}{c|}{M1} &
  \multicolumn{1}{c|}{M1 + Heuristic} &
  \multicolumn{1}{c|}{M2} &
  \multicolumn{1}{c|}{M2 + Heuristic} \\ \hline
Admin                                        & \multicolumn{1}{r|}{20.96}  & \multicolumn{1}{r|}{23.54}  & \multicolumn{1}{l|}{1540.27}  & 1493.88  \\ \hline
DNS                                          & \multicolumn{1}{r|}{77.75}  & \multicolumn{1}{r|}{29.14}  & \multicolumn{1}{l|}{447.39}   & 843.68   \\ \hline
FTP                                          & \multicolumn{1}{r|}{169.75} & \multicolumn{1}{r|}{246.86} & \multicolumn{1}{l|}{20335.41} & 31762.01 \\ \hline
Mail                                         & \multicolumn{1}{r|}{9.94}   & \multicolumn{1}{r|}{14.79}  & \multicolumn{1}{l|}{145.18}   & 492.45   \\ \hline
Database                                          & \multicolumn{1}{r|}{131.93} & \multicolumn{1}{r|}{260.23} & \multicolumn{1}{l|}{300.13}   & 981.25   \\ \hline
Web                                          & \multicolumn{1}{r|}{7.43}   & \multicolumn{1}{r|}{6.96}   & \multicolumn{1}{l|}{103.41}   & 509.33   \\ \hline
\end{tabular}
\end{footnotesize}
\label{tab:running_example_time}
\end{table}

We first investigate how the performance of the \tool tool changes with respect to the computation time as we vary the target (goal) node. The purpose of this experiment is to understand the tool's performance for  Impenetrability and Attack Difficulty Analysis.
TABLE \ref{tab:running_example_time} presents the time required to identify two distinct solutions for each potential goal or attacker's target. For this analysis, the value of $\alpha$  is set to 2 (defining the threshold for considering an improvement in robustness; a solution is deemed to enhance this security only if it increases by at least 2), enabling the examination of solutions under metric 2 (\textit{attack difficulty metric}). 

In our example, solutions are simple, usually requiring only one or two steps. Computing multiple plans takes longer, and in some cases, skipping the heuristic and selecting updates randomly led to faster results. Thus, the time saved by the heuristic doesn’t always outweigh the cost of extra plan calculations.
To explore the effectiveness of our approach in more complex scenarios, we evaluated it by varying the number of nodes and vulnerabilities.

\begin{table}[h]
\caption{
Comparison of Time Required for Various Nodes.}
\centering
\resizebox{\columnwidth}{!}{%
\begin{tabular}{|l|llll|}
\hline
\multirow{2}{*}{\textbf{\# Nodes}} & \multicolumn{4}{c|}{\textbf{Time (s)}}                        
\\ \cline{2-5} 
& \multicolumn{1}{c|}{M1} & \multicolumn{1}{c|}{M1+Heuristic} & \multicolumn{1}{c|}{M2} & \multicolumn{1}{c|}{M2+Heuristic} \\ \hline
10 & \multicolumn{1}{l|}{260.78 $\pm$ 149.72}   & \multicolumn{1}{l|}{213.55 $\pm$ 129.21}   & \multicolumn{1}{l|}{97.2 $\pm$ 92.07}       & 88.67 $\pm$ 97.22    \\ \hline
15 & \multicolumn{1}{l|}{348.57 $\pm$ 299.74}   & \multicolumn{1}{l|}{260.01 $\pm$ 251.46}   & \multicolumn{1}{l|}{526.36 $\pm$ 479.95}    & 289.15 $\pm$ 345.79  \\ \hline
20 & \multicolumn{1}{l|}{4389.53 $\pm$ 3210.5}  & \multicolumn{1}{l|}{2664.91 $\pm$ 1616.98} & \multicolumn{1}{l|}{673.04 $\pm$ 690.33}    & 216.13 $\pm$ 147.66  \\ \hline
25 & \multicolumn{1}{l|}{1293.94 $\pm$ 1224.48} & \multicolumn{1}{l|}{244.68 $\pm$ 147.07}   & \multicolumn{1}{l|}{8071.04 $\pm$ 12964.31} & 3475.81 $\pm$ 4387.2 \\ \hline
30 & \multicolumn{1}{l|}{3266.08 $\pm$ 2712.53} & \multicolumn{1}{l|}{1680.93 $\pm$ 2123.6}  & \multicolumn{1}{l|}{7404.54 $\pm$ 11908.29} & 3756.9 $\pm$ 7200.47 \\ \hline
\end{tabular}%
}
\label{tab:table_time}
\end{table}

The second factor we investigate is the computational
characteristics of our proposed algorithm when we increased the number of nodes in the network to address the scalability issues in network security. We generated directed, scale-free network topologies using the Barabási-Albert model to simulate realistic network environments for \textit{what-if analysis}, as this model reflects the power-law structure commonly found in real-world networks.
We varied the total number of nodes (10,15,20,25,30) and vulnerabilities associated with each node, and each new node added during the network's construction connects to 2 existing nodes. 

TABLE \ref{tab:table_time} displays the time taken (in seconds, along with their standard deviations) to find two distinct solutions for metrics 1 and 2, we evaluated results with and without using heuristics, averaging across varying numbers of vulnerabilities per node group. Specifically, each node group is analyzed across four problem instances with vulnerabilities representing 20\%, 40\%, 60\%, and 80\% of the total number of nodes, with vulnerabilities randomly distributed. For example, a graph with 10 nodes includes problem instances with 2, 4, 6, and 8 randomly distributed vulnerabilities. While metric 1 uses a fixed initial and goal node for each problem instance, metric 2 employs three randomly selected initial nodes and three randomly chosen goal nodes, and the $\alpha$ value for metric 2 is set to 2. As observed from the table, using heuristics reduces the time required to find solutions for both metrics across various node counts.

The search utilizes modified $A^*$ to identify model updates, it is crucial to prioritize which updates to apply first. The heuristic we use effectively manages this prioritization. Specifically, for both metrics 1 and 2, each current model update begins with the calculation of three different plans, based on which the heuristic evaluations are made (as detailed in Algorithm 2). This process provides insights into how close the subsequent updates are to achieving the desired final outcomes.

\section{Related Work}
\label{sec:related-work}

Attack graphs (and variants) have been variously studied by the security research community for modeling and analyzing security postures of networked systems and drive cyber risk management (see for example,
~\cite{phillips-1998,ray-pool-2005,ou-mulval-2005,noel-2003,b17}). 
The main efforts have been in addressing issues such as automatically generating attack graphs and identifying attack paths
~\cite{kaynar2015distributed}, visualization of security postures \cite{b43},
 attack graphs as analysis tool
~\cite{zhao_2021_structural}, and for security hardening
~\cite{dewri_optimal_2007,dewri_optimal_2012,b18}.
 Poolsappasit et al.\ introduced Bayesian Attack Graphs (BAG)~\cite{poolsappasit2011dynamic} for risk evaluation and mitigation planning. Beckers et al.\ combined high-level attack tree analysis and low-level attack graph analysis to compute the probability of successful attacks~\cite{beckers2014determining}.
 This process was extended in Beckers et al. by considering the effects of social engineering threats~\cite{beckers2015analysis}. 

AI-Planning tools have been introduced in the cyber-security applications which use PDDL to generate attack graph \cite{ghosh2012planner,tiwary2017pddlassistant,b37,ray_2023_AI,bashir2024resiliency}. 
Ghosh et al. proposed  Planner~\cite{ghosh2012planner}, a specialized search algorithm from the field of AI, to generate scalable and time-efficient representations of attack graphs. Tiwary et al.\ introduced a methodology and a tool called PDDLAssistant~\cite{tiwary2017pddlassistant} to facilitate the incremental development of PDDL representations for cyber-security domains. This process is further enhanced in Bezawada et. al.\ where they introduced AGBuilder~\cite{b37} to address the limitations of scalable attack graph generation and representation.

Attack graphs have often proven to be versatile tools for enhancing the analysis and understanding of complex security vulnerabilities across various systems~\cite{unger2023risk}. Albanese et al.\ provided polynomial time algorithms~\cite{albanese2013efficient} that evaluated k-zero-day safety in large networks. However, this work is not suited for what-if analysis.  
Miller et al.\ introduced GRASP~\cite{miller_2023_grasp}, leveraging machine learning to optimize attack path identification, which significantly speeds up computational times without compromising on accuracy. Complementing this, Milani et al.\ ~\cite{milani_2020_harnessing}  delve into the strategic use of deception in security games.
However, a limitation was the need for additional memory page transfers among search agents and coordination among them. Our proposed tool \tool is able to handle the complexity and scalability issues as we are using an AI planner with heuristics. 

In the field of network hardening, researchers have made significant contributions, building upon each other's work to advance the understanding and application of this approach
~\cite{dewri_optimal_2007,dewri_optimal_2012}.
Dewri et al.\ \cite{dewri_optimal_2007} introduced the notion of optimal security hardening by developing attack tree cost models for network attack and defense. Later, they extended this work by considering the attacker's perspective in network hardening~\cite{dewri_optimal_2012}. 
Roberts et al.\ proposed an alternative approach to home computer security software using a software agent~\cite{roberts2012using} that could assist users in managing their security risks.
But, traditional attack graph analysis approach treats network hardening as a multi-objective optimization problem, focuses on producing an ``optimal'' solution based on predefined criteria. In contrast, \tool\ enables what-if analysis, allowing sys-admin  to explore multiple scenarios and choose the most desirable course of action based on cost, availability of resources, human intuition, or other constraints. Sometimes an optimal solution may not be a viable solution for a particular scenario. The key difference is that, \tool could incorporate human decision-making (via what-if analysis) rather than replacing it with automated optimization.

\section{Conclusion and Future Work}
\label{sec:conclusion-future}

In this work, we have presented a method for integrating AI planning techniques with attack graphs to enhance network security. Our contributions involve modeling vulnerabilities and their dependencies as well as network connectivity as a hypergraph called \frmtitle\ (\asg) using the PDDL. \asg\ is non-monotonic, unlike most other similar models in the prior art. We develop the \tool toolset for evaluating \frmtitle\, creating unsolvable plans for proposed changes. The tool allows a human-in-the-loop interaction. These features contribute synergistically to allow the administrator to perform what-if analysis and also to reason about different attacker motives. The logic and reasoning framework of \tool closely matches how human users reason about actions, and thus the suggestions generated are intuitive and easy to understand. 
Our future work includes improving the framework to allow for cost-benefit analysis for network hardening. We also plan to do usability studies to understand how suggestions provided by \tool are perceived by domain experts.
\section*{Acknowledgment}

This work was partially supported by the U.S.\ National Science Foundation under Grant No. 1822118 and 2226232,
Award Numbers DMS 2123761, by the industry member partners of the NSF IUCRC Center for Cyber Security Analytics and
Automation -– AMI, NewPush, Cyber Risk Research, NIST and ARL -– by the State of Colorado (grant \#SB 18-086) and by the
authors’ institutions. Any opinions, findings, and conclusions or recommendations expressed in this material are those of the authors and do not necessarily reflect the views of the NSF, or other organizations and agencies.

\bibliographystyle{ACM-Reference-Format}
\balance
\bibliography{ref,Sections/references}

\appendix
\section*{Appendix A: Visual Representation of Attack-Connectivity Graph}

\begin{figure*}[hbtp]
    \centering
    \includegraphics[width=\textwidth]{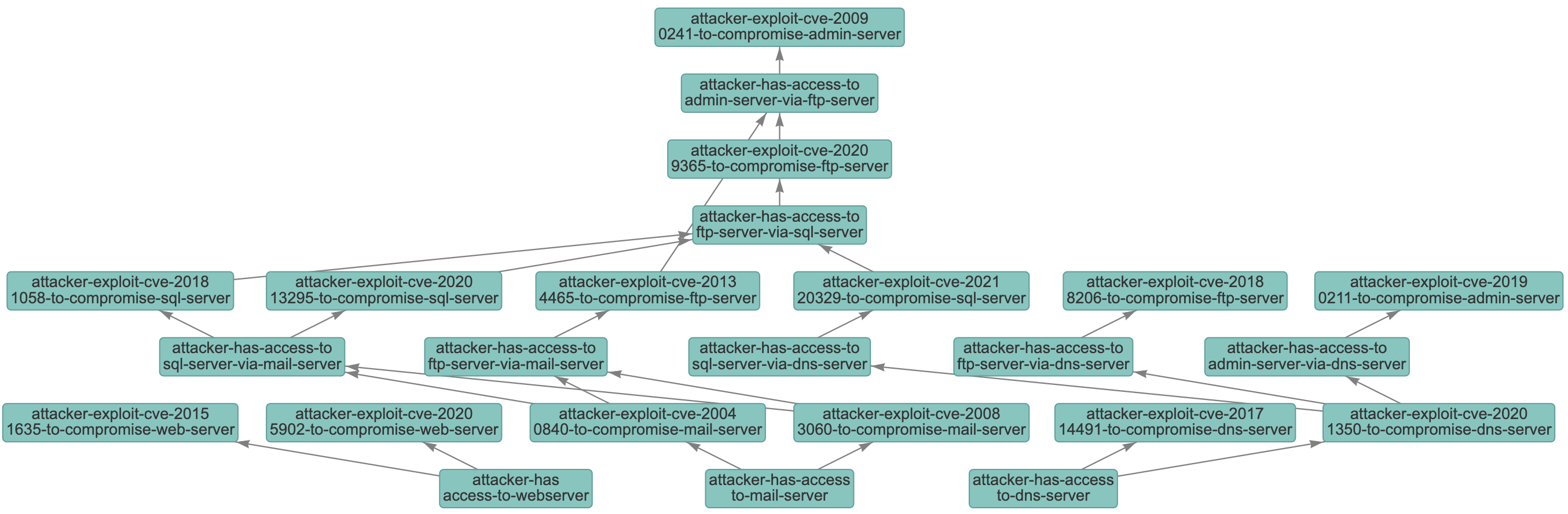}
    \caption{Attack-Connectivity Graph of Test Network.}
   \label{fig:acg-graph}
\end{figure*}

Figure \ref{fig:acg-graph} provides a visual representation of the \asg\ for the test network used in this work and described earlier in Section \ref{sec:architecture}. In \asg\, we have two types of directed edges. The normal directed edges indicate that the attacker exploits a particular vulnerability to compromise a node. The node/host remains functional after the attack, allowing the attacker to use it as a reference to compromise additional nodes. But the directed edges with a cross represent that an attacker exploits a particular vulnerability to compromise a node, rendering it nonfunctional post-attack. For example, an attacker might exploit a heap-based buffer overflow vulnerability in a DNS Server machine (D-1) to launch a DoS attack, which then causes the machine to crash and become unusable for further attacks.

\section*{Appendix B:  Vulnerabilities in the network}
TABLE \ref{tab:vul} represents the host/node in the network, its IP address. W-1, W-2, $\ldots$ F-3 represent various services running on each host, and their vulnerabilities along with the CVEs associated with each host.

\begin{table}[!h]
\caption{Vulnerabilities (CVE ID) on each server.}
\label{tab:vul}
\begin{footnotesize}
\centering
\resizebox{\columnwidth}{!}{%
\begin{tabular}{|c|l|l|}
\hline
\textbf{Host/Node} &
  \textbf{Services: Vulnerabilities} &
  \textbf{CVE\#} \\ \hline
\multirow{2}{*}{\begin{tabular}[c]{@{}l@{}}Web Server\\ (192.168.1.1)\end{tabular}} &
  W-1: Remote code execution in HTTP.sys &
  CVE-2015-1635 \\ \cline{2-3} 
 &
  W-2: Remote code execution &
  CVE-2020-5902 \\ \hline
\multirow{2}{*}{\begin{tabular}[c]{@{}l@{}}DNS Server \\ (192.168.2.1)\end{tabular}} &
  D-1: Heap based buffer-overflow &
  CVE-2019-0211 \\ \cline{2-3} 
 &
  D-2: Remote code execution &
  CVE-2020-1350 \\ \hline
\multirow{2}{*}{\begin{tabular}[c]{@{}l@{}}Mail Server \\ (192.168.3.1)\end{tabular}} &
  M-1: Arbitrary code execution &
  CVE-2004-0840 \\ \cline{2-3} 
 &
  \begin{tabular}[c]{@{}l@{}}M-2: Sensitive information disclosure via \\ malformed input and invalid session ID\end{tabular} &
  CVE-2008-3060 \\ \hline
\multirow{3}{*}{\begin{tabular}[c]{@{}l@{}}Database \\ Server \\ (192.168.4.1)\end{tabular}} &
  S-1: Code execution with superuser permissions &
  CVE-2018-1058 \\ \cline{2-3} 
 &
  S-2: SSRF via malicious dockerd server &
  CVE-2020-13295 \\ \cline{2-3} 
 &
  S-3: Improper validation of cstrings &
  CVE-2021-20329 \\ \hline
\multirow{2}{*}{\begin{tabular}[c]{@{}l@{}}Admin \\ Server \\ (192.168.5.1)\end{tabular}} &
  A-1: Stack based buffer-overflow &
  CVE-2009-0241 \\ \cline{2-3} 
 &
  \begin{tabular}[c]{@{}l@{}}A-2: Arbitrary code execution via scoreboard \\ manipulation\end{tabular} &
  CVE-2022-37835 \\ \hline
\multirow{3}{*}{\begin{tabular}[c]{@{}l@{}}FTP \\ Server \\ (192.168.6.1)\end{tabular}} &
  F-1: Unrestricted file upload &
  CVE-2013-4465 \\ \cline{2-3} 
 &
  F-2: Out-of-bounds (OOB) read in function &
  CVE-2020-9365 \\ \cline{2-3} 
 &
  F-3: Windows FTP Server DoS &
  CVE-2018-8206 \\ \hline
\end{tabular}%
}
\end{footnotesize}
\end{table}

\end{document}